\definecolor{darkgreen}{rgb}{0.0, 0.3, 0.0}
\colorlet{Changes@Color}{darkgreen}
\theoremstyle{definition}
\newtheorem{theorem}{Theorem}[section]
\newtheorem{fact}[theorem]{Fact}
\newtheorem{proposition}[theorem]{Proposition}
\newtheorem{definition}[theorem]{Definition}
\newtheorem{lemma}[theorem]{Lemma}
\newtheorem{question}[theorem]{Question}  
\newtheorem{remark}[theorem]{Remark}
\begin{document} 

\onehalfspace

   \title{An extension of May's Theorem to three alternatives: \\ axiomatizing {M}inimax voting}

 \author{Wesley H. Holliday$^\dagger$ and Eric Pacuit$^\ddagger$ \\ \\
 $\dagger$ University of California, Berkeley {\normalsize (\href{mailto:wesholliday@berkeley.edu}{wesholliday@berkeley.edu})} \\
$\ddagger$ University of Maryland {\normalsize (\href{mailto:epacuit@umd.edu}{epacuit@umd.edu})}}
 
 \date{{\small Forthcoming in \textit{Social Choice and Welfare}}}
 
\maketitle

 \begin{abstract} May's Theorem [K.~O.~May, Econometrica 20 (1952) 680-684] characterizes majority voting on two alternatives as the unique preferential voting method satisfying several simple axioms. Here we show that by adding some desirable axioms to May's axioms, we can uniquely determine how to vote on three alternatives (setting aside tiebreaking). In particular, we add two axioms stating that the voting method should mitigate \textit{spoiler effects} and   the so-called \textit{strong no show paradox}. We prove a theorem stating that any preferential voting method satisfying our enlarged set of axioms, which includes some weak homogeneity and preservation axioms, must choose from among the \textit{Minimax} winners in all three-alternative elections. When applied to more than three alternatives, our axioms also distinguish Minimax from other known voting methods that coincide with or refine Minimax for three alternatives.\end{abstract}

\section{Introduction}

 Voting on two alternatives appears unproblematic in comparison to voting on three, at least in contexts where all voters (resp.~alternatives) are to be treated equally. For two alternatives, May's Theorem (\citealt{May1952}, see Theorem \ref{May} below) and its descendants (e.g., \citealt{Tideman1986}, \citealt{Asan2002}, \citealt{Woeginger2003}, \citealt{Llamazares2006}) uniquely characterize majority voting. For three alternatives, one faces the problem of majority cycles (\citealt{Condorcet1785}) and the related Arrow Impossibility Theorem (\citealt{Arrow1963}). In an Arrovian framework in which voters submit rankings of the alternatives, on the basis of which a winning alternative is to be chosen, dozens of voting methods have been proposed for three or more alternatives.  Nonetheless, in this paper we show that by adding some desirable axioms to May's axioms, we can uniquely determine how to vote in three-alternative elections, except in rare cases involving tiebreaking.
 
 The voting method determined by our axioms (modulo tiebreaking) is the Minimax method (\citealt{Simpson1969}, \citealt{Kramer1977}), which chooses the alternative that minimizes the maximal loss, if any, in head-to-head majority comparisons with other alternatives. In two-alternative elections, Minimax reduces to majority voting.  In three-alternative elections, it apparently coincides with Condorcet's \citeyearpar{Condorcet1785} own approach to resolving cycles with three alternatives (\citealt[p.~1233]{Young1988}).  In fact, a number of sophisticated Condorcet-consistent voting methods coincide with or refine Minimax in three-alternative elections: Kemeny (\citealt{Kemeny1959}), Ranked Pairs (\citealt{Tideman1987}), Beat Path (\citealt{Schulze2011}), Split Cycle (\citealt{HP2023}), and Stable Voting (\citealt{HP2023stable}). Thus, our result provides support for this common solution to the problem of voting on three alternatives. Of course, different sets of axioms lead to characterizations of other voting methods;\footnote{See, e.g., \citealt{Young1974} and \citealt{Nitzan1981} on the Borda method, \citealt{Richelson1978}, \citealt{Ching1996}, and \citealt{Sekiguchi2012} on the Plurality method, \citealt{Merlin2003} on scoring methods, \citealt{Henriet1985} on the Copeland method,  \citealt{Freemanetal2014} on the Instant Runoff method, and \citealt{HP2021} and \citealt{Ding2024} on the Split Cycle method.} some of these characterizations may also be considered extensions of May's Theorem (e.g., \citealt{Maskin2024});  and we leave for future discussion the normative question of which set of axioms is~preferable.
 
While our main goal is to identify axioms characterizing the common solution to voting on three alternatives given by the Condorcet methods mentioned above, we will state axioms that can also be applied to any number of alternatives. Significantly, for more than three alternatives, our axioms distinguish Minimax from all other voting rules of which we are aware. Hopefully all of the Condorcet methods listed above will eventually be axiomatically characterized with no restriction on the number of alternatives, as has been done recently for Split Cycle (\citealt{Ding2024}). Our axioms may be a step in this direction for Minimax.
 
 After some preliminary setup and review of May's Theorem in Section \ref{Prelim}, we state and prove our main result in Section \ref{Main}. We then consider the independence of our axioms in Section \ref{IndependenceSection}, discuss what happens beyond three alternatives in Section \ref{BeyondThree}, and conclude in Section~\ref{ConclusionSection}.
 
\section{Preliminaries}\label{Prelim}

Our main result will be in a setting that allows the set of voters and the set of alternatives to vary from election to election, and May's Theorem can also be stated in this setting. 

Fix sets $\mathcal{V}$ and $\mathcal{X}$ of \textit{voters} and \textit{alternatives}, respectively. Assume $\mathcal{V}$ is countably infinite and $|\mathcal{X}|\geq 3$. Given nonempty finite subsets $V\subseteq\mathcal{V}$ and $X\subseteq\mathcal{X}$, a \textit{$(V,X)$-profile} is a function $\mathbf{P}$ assigning to each $i\in V$ a strict weak order\footnote{A \textit{strict weak order} on a set $X$ is a binary relation $P$ that is asymmetric (if $(a,b)\in P$, then $(b,a)\not\in P$) and negatively transitive (if $(a,b)\not\in P$ and $(b,c)\not\in P$, then $(a,c)\not\in P$). It follows that $P$ is transitive. A strict weak order is simply the strict part of a complete and transitive relation, i.e., a ranking allowing ties. However, we will not actually need the binary relations $\mathbf{P}(i)$ to satisfy negative transitivity or even transitivity for our results to hold, as explained in Remark \ref{Domains}.} $\mathbf{P}(i)$ on $X$. If $(a,b)\in \mathbf{P}(i)$, we say that $i$ \textit{ranks $a$ above} $b$; if $(a,b)\not\in \mathbf{P}(i)$ and $(b,a)\not\in \mathbf{P}(i)$, we say that $i$ \textit{ranks $a$ and $b$ as tied}. We write $V(\mathbf{P})$ for the relevant set $V$ of voters and $X(\mathbf{P})$ for the relevant set $X$ of alternatives. A \textit{profile} is a $(V,X)$-profile for some nonempty finite $V\subseteq\mathcal{V}$ and $X\subseteq\mathcal{X}$. Given a profile $\mathbf{P}$ and set $Y\subsetneq X(\mathbf{P})$ of alternatives, let $\mathbf{P}_{\mid Y}$ be the profile obtained by restricting each $\mathbf{P}(i)$ to $Y$. Given $b\in X(\mathbf{P})$, let $\mathbf{P}_{-b}=\mathbf{P}_{\mid X(\mathbf{P})\setminus\{b\}}$.

A \textit{voting method} is a function $F$ assigning to each profile $\mathbf{P}$ in some set $\mathrm{dom}(F)$ of profiles a nonempty subset $F(\mathbf{P})\subseteq X(\mathbf{P})$. Given voting methods $F$ and $G$, we say that $F$ \textit{refines} $G$ on some domain $\mathcal{D}$ of profiles if for every $\mathbf{P}\in\mathcal{D}$, we have $F(\mathbf{P})\subseteq G(\mathbf{P})$. Note that every voting method refines itself.

According to the method $Maj$ of \textit{majority voting} on the domain of two-alternative profiles, for any such  profile $\mathbf{P}$ and alternatives $a,b\in X(\mathbf{P})$, we have $a\in Maj(\mathbf{P})$ if and only if in $\mathbf{P}$ at least as many voters rank $a$  above $b$ as rank $b$ above $a$. To state May's Theorem characterizing this voting method, we need the three axioms on voting methods in Definition~\ref{MayAx}. Informally, anonymity says that the names of voters do not matter (so all voters are treated equally); neutrality says that the names of alternatives do not matter (so all alternatives are treated equally); and positive responsiveness says that improving a winning alternative's place in a voter's ranking not only keeps that alternative a winner\footnote{As required by so-called \textit{non-negative responsiveness} (see, e.g., \citealt{Tideman1987}).} but also makes that alternative the unique winner (so ties are easily broken).

\begin{definition}\label{MayAx} Let $F$ be a voting method.
\begin{enumerate}
\item $F$ satisfies \textit{anonymity} if for any $\mathbf{P},\mathbf{P}'\in\mathrm{dom}(F)$ with $X(\mathbf{P})=X(\mathbf{P}')$, if there is a permutation $\pi$ of $\mathcal{V}$ such that $\pi[V(\mathbf{P})]=V(\mathbf{P}')$ and $\mathbf{P}(i)=\mathbf{P}'(\pi(i))$ for all $i\in V(\mathbf{P})$, then $F(\mathbf{P})=F(\mathbf{P}')$.
 \item $F$ satisfies \textit{neutrality} if for any $\mathbf{P},\mathbf{P}'\in\mathrm{dom}(F)$ with $V(\mathbf{P})=V(\mathbf{P}')$, if there is a permutation $\tau$ of $\mathcal{X}$ such that $\tau[X(\mathbf{P})]=X(\mathbf{P}')$ and for each $i\in V(\mathbf{P})$ and $a,b\in X(\mathbf{P})$, $(a,b)\in \mathbf{P}(i)$ if and only if $(\tau(a),\tau(b))\in \mathbf{P}'(i)$, then $\tau[F(\mathbf{P})]= F(\mathbf{P}')$.
\item $F$ satisfies \textit{weak positive responsiveness} if for any  $\mathbf{P},\mathbf{P}'\in\mathrm{dom}(F)$ and $a\in X(\mathbf{P})$, if $a\in F(\mathbf{P})$ and $\mathbf{P}'$ is obtained from $\mathbf{P}$ by one voter in $V(\mathbf{P})$ who ranked $a$ uniquely last in $\mathbf{P}$ switching to ranking $a$ uniquely first in $\mathbf{P}'$, otherwise keeping their ranking the same, then $F(\mathbf{P}')=\{a\}$.
\end{enumerate}
\end{definition}

\begin{remark}\label{PRremark} We add the adjective `weak' in `weak positive responsiveness' because of our assumption that the relevant voter ranked alternative $a$ uniquely \textit{last} and switches to ranking $a$ uniquely \textit{first}. This makes our statement weaker than May's \citeyearpar{May1952} original statement of positive responsiveness in two-alternative elections, which implies that if $a\in F(\mathbf{P})$ and the voter ranked $a$ and $b$ as tied, then their moving $a$ above $b$ ensures $F(\mathbf{P}')=\{a\}$, and that if the voter ranked $b$ above $a$, then their making $a$ and $b$ tied ensures $F(\mathbf{P}')=\{a\}$. Our statement is also weaker than the generalization of positive responsiveness to elections with more than two alternatives in \citealt[Def.~7]{Barbera1977}, even if voters' rankings do not contain ties, since that generalization states that if $a\in F(\mathbf{P})$, then a voter moving $a$ up at all in their ranking ensures $F(\mathbf{P}')=\{a\}$, whereas we require that the voter moves $a$ from uniquely last to uniquely first, which will matter when we turn to three-alternative elections in Section~\ref{Main}.\end{remark}

We are now ready to state May's Theorem. We include a proof sketch in order to keep the paper self-contained.

\begin{theorem}[\citealt{May1952}]\label{May} Let $F$ be a voting method on the domain of two-alternative profiles. Then the following are equivalent:
\begin{enumerate}
\item\label{May1} $F$ satisfies anonymity, neutrality, and weak positive responsiveness;
\item\label{May2} $F$ is majority voting.
\end{enumerate}
\end{theorem}
\begin{proof} (Sketch) The implication from \ref{May2} to \ref{May1} is obvious. For the implication from \ref{May1} to \ref{May2}, we consider two cases. Case 1: in $\mathbf{P}$, the same number of voters rank $a$ above $b$ as rank $b$ above $a$. Then anonymity and neutrality imply $F(\mathbf{P})=\{a,b\}$, in agreement with majority voting.

Case 2: in $\mathbf{P}$, $n$ voters rank $b$ above $a$, $n+k$ (for $k>0$) rank $a$ above $b$, and the remaining $\ell$ voters rank them as tied. Suppose for contradiction that $b\in F(\mathbf{P})$. Then by anonymity and neutrality, in any profile $\mathbf{Q}$ in which $n$ voters rank $a$ above $b$, $n+k$ rank $b$ above $a$, and the remaining $\ell$ voters rank them as tied,  we have $a\in F(\mathbf{Q})$. Now let $\mathbf{P}'$ be obtained from $\mathbf{P}$ by flipping $k$ voters who rank $a$ above $b$ to rank $b$ above $a$. Then $\mathbf{P}'$ is a $\mathbf{Q}$ as above, so $a\in F(\mathbf{P}')$. But by repeated application of weak positive responsiveness starting from $\mathbf{P}$, $F(\mathbf{P}')=\{b\}$, so we have a contradiction. Hence $b\not\in F(\mathbf{P})$, so $F(\mathbf{P})=\{a\}$ in agreement with majority~voting.\end{proof}

\section{Main result}\label{Main}

Given a profile $\mathbf{P}$ and $a,b\in X(\mathbf{P})$, the \textit{margin of $a$ over $b$ in $\mathbf{P}$},  $Margin_\mathbf{P}(a,b)$, is the number of voters who rank $a$ above~$b$ in $\mathbf{P}$ minus the number of voters who rank $b$ above $a$ in $\mathbf{P}$. According to the Minimax voting method, in a profile $\mathbf{P}$, each alternative $a$ receives a score equal to the maximum margin that any alternative $b$ has over $a$, so lower scores are better; then $Minimax(\mathbf{P})$ is the set of alternatives with minimal score. For example, in a profile with three alternatives, $a$, $b$, and $c$, if the margin of $a$ over $b$ is 10, the margin of $b$ over $c$ is 6, and the margin of $c$ over $a$ is 8, then $Minimax(\mathbf{P})=\{c\}$. In a profile in which one alternative has positive margins over all others, this alternative---called the \textit{Condorcet winner}---is the unique Minimax winner. In fact, it is easy to see that in arbitrary profiles, a Minimax winner can be characterized as an alternative $m$  that is ``closest'' to being a Condorcet winner in the sense that $m$ can be made the Condorcet winner with the addition of the fewest number of voters to the profile (cf.~\citealt[\S~5]{Young1977}). If there are multiple alternatives with the same Minimax score, then there are multiple Minimax winners; but as the number of voters goes to infinity, the proportion of profiles in which there are multiple Minimax winners goes to zero (see, e.g., \citealt{HT2020}).

\begin{fact}\label{MinimaxMay} Minimax satisfies anonymity, neutrality, and weak positive responsiveness.
\end{fact}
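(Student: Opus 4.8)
The plan is to reduce everything to the behavior of the \emph{margin function}, since $Minimax(\mathbf{P})$ depends on $\mathbf{P}$ only through the numbers $Margin_\mathbf{P}(a,b)$: writing $s_\mathbf{P}(a)=\max\{Margin_\mathbf{P}(b,a): b\in X(\mathbf{P})\setminus\{a\}\}$ for the Minimax score of $a$ in $\mathbf{P}$, we have $Minimax(\mathbf{P})=\{a\in X(\mathbf{P}): s_\mathbf{P}(a)\le s_\mathbf{P}(b)\text{ for all }b\in X(\mathbf{P})\}$. So it suffices to track how the move described in each axiom affects margins.

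For anonymity, $Margin_\mathbf{P}(a,b)$ is a difference of cardinalities of sets of voters and so is unaffected by relabeling the voters via a permutation $\pi$ as in Definition \ref{MayAx}; hence all scores, and therefore the winner set, are unchanged. For neutrality, if $\mathbf{P}'$ arises from $\mathbf{P}$ by a permutation $\tau$ of alternatives as in Definition \ref{MayAx}, then voter by voter the pairs ranked one way in $\mathbf{P}$ correspond under $\tau$ to the pairs ranked the same way in $\mathbf{P}'$, so $Margin_{\mathbf{P}'}(\tau(a),\tau(b))=Margin_\mathbf{P}(a,b)$ for all $a,b\in X(\mathbf{P})$; taking maxima gives $s_{\mathbf{P}'}(\tau(a))=s_\mathbf{P}(a)$, whence $\tau[Minimax(\mathbf{P})]=Minimax(\mathbf{P}')$.

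The only clause requiring a short computation is weak positive responsiveness. Suppose $a\in Minimax(\mathbf{P})$ and $\mathbf{P}'$ is obtained by one voter $i$ moving $a$ from uniquely last to uniquely first, leaving the rest of the ranking fixed. Since $i$ ranked every $b\ne a$ above $a$ in $\mathbf{P}$ and ranks $a$ above every such $b$ in $\mathbf{P}'$, and no one else changes, we get $Margin_{\mathbf{P}'}(b,a)=Margin_\mathbf{P}(b,a)-2$ for every $b\ne a$, while $Margin_{\mathbf{P}'}(b,c)=Margin_\mathbf{P}(b,c)$ whenever $a\notin\{b,c\}$. Hence $s_{\mathbf{P}'}(a)=s_\mathbf{P}(a)-2$, whereas for each $b\ne a$ the score $s_{\mathbf{P}'}(b)$ is the maximum of $Margin_\mathbf{P}(a,b)+2$ and the numbers $Margin_\mathbf{P}(c,b)$ for $c\notin\{a,b\}$, which is at least the maximum of $Margin_\mathbf{P}(a,b)$ and those same numbers, that is, $s_{\mathbf{P}'}(b)\ge s_\mathbf{P}(b)$. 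Using that $a$ had minimal score in $\mathbf{P}$, for every $b\ne a$ we conclude $s_{\mathbf{P}'}(a)=s_\mathbf{P}(a)-2\le s_\mathbf{P}(b)-2<s_\mathbf{P}(b)\le s_{\mathbf{P}'}(b)$, so $a$ is the unique score-minimizer in $\mathbf{P}'$ and $Minimax(\mathbf{P}')=\{a\}$.

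I expect no genuine obstacle here; the only points to watch are that the hypothesis of the responsiveness clause forces $|X(\mathbf{P})|\ge 2$, so the maxima defining the scores are over nonempty sets, and that one keeps the signs straight when passing between $Margin_\mathbf{P}(a,b)$ and $Margin_\mathbf{P}(b,a)$. Everything else is routine bookkeeping on margins.
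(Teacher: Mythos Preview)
Your proposal is correct and follows essentially the same approach as the paper: anonymity and neutrality are dispatched by noting that Minimax depends only on margins, and weak positive responsiveness is handled by observing that $a$'s score strictly decreases while all other scores do not decrease. Your version simply spells out the margin computations (the $-2$ for $a$ and the $+2$ inside each other alternative's maximum) that the paper leaves implicit in the phrase ``$a$'s score decreases \dots\ while other alternatives' scores do not decrease.''
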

\begin{proof} Anonymity and neutrality are immediate from the definition of Minimax. For weak positive responsiveness, if $a\in Minimax(\mathbf{P})$, so $a$ has a minimal Minimax score in $\mathbf{P}$, and $\mathbf{P}'$~is~obtained from $\mathbf{P}$ by one voter in $V(\mathbf{P})$ who ranked $a$ uniquely last in $\mathbf{P}$ switching to ranking $a$ uniquely first in $\mathbf{P}'$, then $a$'s score decreases from $\mathbf{P}$ to $\mathbf{P}'$, while other alternatives' scores do not decrease, so $a$ has the unique minimum score in $\mathbf{P}'$, so $Minimax(\mathbf{P}')=\{a\}$.
\end{proof}
\noindent In the Appendix, we also consider a refinement of Minimax that satisfies full positive responsiveness as in Remark \ref{PRremark}, but in our main theorem (Theorem \ref{MainThm}) it suffices to assume just weak positive responsiveness.

In the rest of this section, we extend May's Theorem to the domain of \textit{three-alternative} profiles: we add axioms alongside May's  in order to force a voting method $F$ to refine Minimax, i.e., to choose from among the Minimax winners, on this domain. Thus, whenever Minimax selects a unique winner, $F$ will select the same unique winner; in the unlikely (with many voters) event of a Minimax tie, $F$ may perform further tie-breaking. Before adding axioms alongside May's, we stress the importance of weak positive responsiveness in what follows: \textit{all the other axioms in our main theorem are satisfied by the trivial voting method that selects all alternatives in every profile}, as well as by other methods that are non-trivial but still produce many ties, e.g., the Weighted Covering method of \citealt{Dutta1999} and  \citealt{Fernandez2018} (see \citealt[Proposition~3.16]{Ding2024}). Thus, weak positive responsiveness is what will give our other axioms their ``bite.''

The key additional axioms are the following, which are expressions of the idea that a voting method should respond well to the addition of \textit{new voters} and the addition of \textit{new alternatives} to an election.

\begin{definition}\label{AddAxioms} Let $F$ be a voting method.
\begin{enumerate}
\item\label{AddAxioms1} $F$ satisfies \textit{positive involvement} if for any  $\mathbf{P},\mathbf{P}'\in\mathrm{dom}(F)$ and $a\in X(\mathbf{P})$, if $a\in F(\mathbf{P})$ and $\mathbf{P}'$ is obtained from $\mathbf{P}$ by adding one new voter who ranks $a$ uniquely first, then $a\in F(\mathbf{P}')$.
\item\label{AddAxioms2} $F$ satisfies \textit{immunity to spoilers} if for any $\mathbf{P}\in\mathrm{dom}(F)$ and $a,b\in X(\mathbf{P})$, if $a\in F(\mathbf{P}_{-b})$, $F(\mathbf{P}_{\mid \{a,b\}})=\{a\}$, and $b\not\in F(\mathbf{P})$, then $a\in F(\mathbf{P})$. 
\end{enumerate}
The axiom of \textit{near immunity to spoilers} is defined as in \ref{AddAxioms2} but with $a\in F(\mathbf{P}_{-b})$ replaced by $F(\mathbf{P}_{-b})=\{a\}$.
\end{definition}

Positive involvement (\citealt{Saari1995}) is a variable-voter axiom: adding to an election a voter who ranks $a$ uniquely first should not cause $a$ to go from winning to losing. P\'{e}rez \citeyearpar{Perez2001} dubs violations of positive involvement the  ``(Positive) Strong No Show Paradox.'' Surprisingly, many well-known voting methods violate positive involvement (see \citealt{Perez2001}  and \citealt{HP2021b}), which P\'{e}rez calls ``a common flaw in Condorcet voting correspondences.'' Recall that a voting method satisfies \textit{Condorcet consistency} (or the \textit{Condorcet winner criterion}) if it always picks the Condorcet winner uniquely when one exists. Well-known Condorcet-consistent methods that violate positive involvement already for three alternatives include Black's \citeyearpar{Black1958} method (if there is no Condorcet winner, pick the Borda winner (\citealt{Borda1781})), Condorcet-Plurality (if there is no Condorcet winner, pick the alternative with the most first-place votes; see Lemma~\ref{PIind}), Condorcet-Hare (if there is no Condorcet winner, pick the winner according to the Hare method (\citealt{Hare1859}), also known as Instant Runoff Voting), and more (e.g., Baldwin (\citealt{Baldwin1926}), the version of Nanson (\citealt{Nanson1882}) that iteratively eliminates alternatives with below average Borda score (\citealt[p.~37]{Zwicker2016}), Raynaud (\citealt{Arrow1986}), Condorcet-Bucklin, Condorcet-Coombs, etc.).\footnote{For examples of violations, see \href{https://github.com/wesholliday/may-minimax}{https://github.com/wesholliday/may-minimax}.} By contrast, though Minimax is Condorcet consistent, it does not have the flaw of violating positive involvement.\footnote{P\'{e}rez  \citeyearpar[Lemma 3]{Perez1995} (cf.~\citealt[Lemma 1]{Perez2001}), adapting an argument of Moulin \citeyearpar{Moulin1988}, also shows that if a voting method $F$ satisfies positive involvement and Condorcet consistency, then in any  profile $\mathbf{P}$ of linear ballots (without ties), $F$ must choose from among the alternatives $a$ that Holliday \citeyearpar{Holliday2024} calls \textit{defensible}: for any alternative $b$, there is an alternative $c$ such that $\mathrm{Margin}_\mathbf{P}(c,b)\geq \mathrm{Margin}_\mathbf{P}(b,a)$ (see \citealt{Kasper2019} for a study of the voting method that selects all defensible alternatives). For profiles of strict weak order ballots (allowing ties), there is an analogous result if Condorcet consistency is strengthened to the condition that if there is a weak Condorcet winner (an alternative with non-negative margins against every alternative), then $F$ picks from among the weak Condorcet winners (\citealt[Lemma 2.2]{Holliday2024}). However, there can be Minimax losers who are defensible (see the ``descending case'' in the proof of Theorem \ref{MainThm}), so weak Condorcet consistency and positive involvement are not sufficient for our purposes here. We will use weak positive responsiveness and two axioms to come (block preservation and homogeneity) to show that defensible alternatives that are Minimax losers will not be selected.}

\begin{fact}[\citealt{Perez2001}]\label{MinimaxPI} Minimax satisfies positive involvement.
\end{fact}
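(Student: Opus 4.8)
The plan is to show directly that adding a voter who ranks $a$ uniquely first cannot raise $a$'s Minimax score relative to any other alternative's score, hence cannot move $a$ out of the set of score-minimizers. Let $\mathbf{P}' $ be obtained from $\mathbf{P}$ by adding one new voter $i$ who ranks $a$ uniquely first; write $s_\mathbf{P}(x)$ for the Minimax score of $x$ in $\mathbf{P}$, i.e. $s_\mathbf{P}(x)=\max_{y\in X(\mathbf{P})}Margin_\mathbf{P}(y,x)$. The first step is to track how margins change. For any two alternatives $x,y$ distinct from $a$, the new voter ranks neither above the other in a way that differs from... actually the new voter does rank $a$ above everything but ranks all of $X(\mathbf{P})\setminus\{a\}$ in some fixed order below $a$; what matters is only the pairwise margins, so: $Margin_{\mathbf{P}'}(a,x)=Margin_\mathbf{P}(a,x)+1$ and $Margin_{\mathbf{P}'}(x,a)=Margin_\mathbf{P}(x,a)-1$ for every $x\neq a$, while for $x,y$ both distinct from $a$ the margin $Margin_{\mathbf{P}'}(x,y)$ may change by $+1$, $0$, or $-1$ depending on how voter $i$ ranks $x$ versus $y$; in all cases $|Margin_{\mathbf{P}'}(x,y)-Margin_\mathbf{P}(x,y)|\le 1$.

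The second step is to bound the scores. Since no margin $Margin_{\mathbf{P}}(y,a)$ increases when we pass to $\mathbf{P}'$ (each drops by exactly $1$), we get $s_{\mathbf{P}'}(a)\le s_\mathbf{P}(a)$. Now suppose $a\in Minimax(\mathbf{P})$, so $s_\mathbf{P}(a)\le s_\mathbf{P}(b)$ for all $b$. We want $s_{\mathbf{P}'}(a)\le s_{\mathbf{P}'}(b)$ for all $b\ne a$. The delicate point is that $s_{\mathbf{P}'}(b)$ could in principle be smaller than $s_\mathbf{P}(b)$ — but only by examining which margin realizes the max. If the worst margin against $b$ in $\mathbf{P}$ is realized by $a$, i.e. $s_\mathbf{P}(b)=Margin_\mathbf{P}(a,b)$, then in $\mathbf{P}'$ this margin is $Margin_\mathbf{P}(a,b)+1=s_\mathbf{P}(b)+1$, so $s_{\mathbf{P}'}(b)\ge s_\mathbf{P}(b)+1> s_\mathbf{P}(b)\ge s_\mathbf{P}(a)\ge s_{\mathbf{P}'}(a)$, and we are done for that $b$. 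If instead the worst margin against $b$ in $\mathbf{P}$ is realized by some $c\ne a$, then $s_\mathbf{P}(b)=Margin_\mathbf{P}(c,b)$, and in $\mathbf{P}'$ we have $Margin_{\mathbf{P}'}(c,b)\ge Margin_\mathbf{P}(c,b)-1=s_\mathbf{P}(b)-1\ge s_\mathbf{P}(a)-1$; meanwhile for $a$, pick any $y$ realizing $s_{\mathbf{P}'}(a)=Margin_{\mathbf{P}'}(y,a)=Margin_\mathbf{P}(y,a)-1\le s_\mathbf{P}(a)-1$. So both $s_{\mathbf{P}'}(a)\le s_\mathbf{P}(a)-1$ and $s_{\mathbf{P}'}(b)\ge s_\mathbf{P}(a)-1$, giving $s_{\mathbf{P}'}(a)\le s_{\mathbf{P}'}(b)$ as desired.

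Wait — the clean inequality $s_{\mathbf{P}'}(a)\le s_\mathbf{P}(a)-1$ needs $X(\mathbf{P})$ to have at least two alternatives so that the max defining $s_{\mathbf{P}'}(a)$ is over a nonempty set of $y\ne a$; since we are in a setting with $|X|\ge 3$ (or at least $\ge 2$ for Minimax to be nontrivial) this is fine, and the boundary case of a single alternative is vacuous. Assembling the two cases: for every $b\ne a$ we have $s_{\mathbf{P}'}(b)\ge s_{\mathbf{P}'}(a)$, hence $a$ minimizes the Minimax score in $\mathbf{P}'$, i.e. $a\in Minimax(\mathbf{P}')$. The main thing to get right is precisely the bookkeeping in the second case — that the margin against $b$ realized by a non-$a$ alternative can drop by at most $1$ while $a$'s score drops by at least $1$ — and to handle cleanly the subtlety that the alternative realizing $b$'s worst margin in $\mathbf{P}$ need not still realize it in $\mathbf{P}'$ (which is why we argue with an inequality $s_{\mathbf{P}'}(b)\ge Margin_{\mathbf{P}'}(c,b)$ rather than an equality). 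This is a short argument; no induction or case explosion is needed.
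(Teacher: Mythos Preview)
Your proof is correct and follows essentially the same approach as the paper's: the paper simply observes that $a$'s Minimax score decreases by $1$ while every other alternative's score decreases by at most $1$, which is exactly the content of your argument. Your case split on whether $b$'s worst margin is realized by $a$ or by some $c\neq a$ is correct but unnecessary---the single observation that every margin against any alternative changes by at most $1$ in absolute value already yields $s_{\mathbf{P}'}(b)\ge s_\mathbf{P}(b)-1$ directly, without cases.
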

\begin{proof} Observe that $a$'s Minimax score decreases by $1$ with the addition of a new voter who ranks $a$ uniquely first, while all other Minimax scores decrease by at most $1$ with such an addition, so if $a$ had minimal score before the addition of the new voter, then $a$ still has minimal score after the addition of the new voter.
\end{proof}

Immunity to spoilers (\citealt{HP2023}) is a variable-alternative axiom: adding to an election a losing alternative $b$ should not cause the loss of an alternative $a$ that would win without $b$ in the election and would win in a head-to-head contest with $b$.\footnote{In \citealt{HP2023}, the condition in immunity to spoilers that $a$ would win in a head-to-head contest with $b$ is expressed as the condition that more voters rank $a$ above $b$ than vice versa;  but in the presence of May's axioms that imply majority rule in two-alternative elections, the two formulations of immunity to spoilers are equivalent.} The \textit{near immunity} version in Definition \ref{AddAxioms} adds the assumption that $a$ is the \textit{unique} winner without $b$ in the election, which makes for a weaker axiom.\footnote{Near immunity may be more plausible in cases where some distinct alternatives $a$ and $c$ have a zero margin. Then if both $a$ and $c$ beat a third alternative $b$, but $c$ beats $b$ by a larger margin, then it may be reasonable to break the tie between $a$ and $c$ in favor of $c$. This is consistent with near immunity to spoilers but not with the stronger version. For further discussion, see the last paragraph of Section 1.1 and Definition 4.8.1 in  \citealt{HP2023}, as well as the Appendix of this paper.} Many well-known voting methods violate (near) immunity to spoilers (see \citealt{HP2023}). Although ranked ballots are not collected in Plurality Voting elections, the 2000 Florida Presidential Election is plausibly a case of a spoiler effect for Plurality relative to voters' actual preferences (\citealt{Magee2003}, \citealt{Herron2007}).  Spoiler effects for Instant Runoff Voting include the 2009 Mayoral Election in Burlington, Vermont and the 2022 Alaska Primary Special Election for U.S. Representative (see \href{https://github.com/voting-tools/election-analysis}{https://github.com/voting-tools/election-analysis}). By contrast, Minimax renders such spoiler effects impossible.

\begin{fact}\label{MinimaxIS} Minimax satisfies immunity to spoilers.
\end{fact}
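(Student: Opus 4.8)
The plan is to argue directly with Minimax scores, using the elementary fact that deleting an alternative alters no margin between the remaining alternatives. For a profile $\mathbf{Q}$ and $x\in X(\mathbf{Q})$, write $s_\mathbf{Q}(x)=\max\{\,Margin_\mathbf{Q}(y,x) : y\in X(\mathbf{Q})\setminus\{x\}\,\}$ for the Minimax score of $x$, so $Minimax(\mathbf{Q})$ is the set of $x$ minimizing $s_\mathbf{Q}$. Spelling out the hypotheses of immunity to spoilers for $F=Minimax$: (i)~$a\in Minimax(\mathbf{P}_{-b})$, i.e.\ $s_{\mathbf{P}_{-b}}(a)\le s_{\mathbf{P}_{-b}}(x)$ for all $x\in X(\mathbf{P})\setminus\{b\}$; (ii)~$Minimax(\mathbf{P}_{\mid\{a,b\}})=\{a\}$, which for a two-alternative profile just says $Margin_\mathbf{P}(a,b)>0$ (immediate from the definition, since there $s(a)=Margin_\mathbf{P}(b,a)=-s(b)$); and (iii)~$b\notin Minimax(\mathbf{P})$. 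I must deduce $a\in Minimax(\mathbf{P})$. If $|X(\mathbf{P})|=2$ this is immediate (then $Minimax(\mathbf{P})$ is just the majority winner, which is $a$ by (ii)), so I will assume $|X(\mathbf{P})|\ge 3$.

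The single structural observation I would isolate is: since $Margin_\mathbf{P}(y,x)=Margin_{\mathbf{P}_{-b}}(y,x)$ whenever $x,y\neq b$, for every $x\neq b$ we have $s_\mathbf{P}(x)=\max\big(s_{\mathbf{P}_{-b}}(x),\,Margin_\mathbf{P}(b,x)\big)\ge s_{\mathbf{P}_{-b}}(x)$. In words, reinstating $b$ can only weakly raise scores, and it raises $x$'s score only when $b$'s margin over $x$ exceeds $x$'s old score. In particular, as $Margin_\mathbf{P}(b,a)=-Margin_\mathbf{P}(a,b)<0$ by (ii), reinstating $b$ leaves $a$'s score \emph{unchanged whenever} $s_{\mathbf{P}_{-b}}(a)\ge 0$.

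The argument then proceeds by a case split on whether $a$ is the Condorcet winner of $\mathbf{P}_{-b}$. If it is, then since $a$ also beats $b$ by (ii), $a$ is the Condorcet winner of $\mathbf{P}$, so $Minimax(\mathbf{P})=\{a\}$ and we are done. If it is not, then some $y\in X(\mathbf{P})\setminus\{a,b\}$ satisfies $Margin_\mathbf{P}(y,a)\ge 0$, hence $s_{\mathbf{P}_{-b}}(a)\ge 0$, and so $s_\mathbf{P}(a)=s_{\mathbf{P}_{-b}}(a)$ by the previous paragraph. Now take any $m\in Minimax(\mathbf{P})$. By (iii) $m\neq b$, so $m\in X(\mathbf{P}_{-b})$, and the structural observation together with (i) give $s_\mathbf{P}(m)\ge s_{\mathbf{P}_{-b}}(m)\ge s_{\mathbf{P}_{-b}}(a)=s_\mathbf{P}(a)$. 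Hence $a$ attains the minimum of $s_\mathbf{P}$, i.e.\ $a\in Minimax(\mathbf{P})$.

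I expect no genuine obstacle here: once one notes that margins among the other alternatives survive the deletion of $b$ and that $b$ cannot raise the score of an alternative it loses to, the conclusion is a short chain of inequalities. The only point that needs separate handling is the Condorcet-winner case, in which $s_{\mathbf{P}_{-b}}(a)$ may be negative so that the generic chain of inequalities does not pin down $s_\mathbf{P}(a)$; there one observes instead that $a$ is the Condorcet winner of the full profile.
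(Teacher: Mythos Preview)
Your proof is correct and follows essentially the same approach as the paper: show that reinstating $b$ cannot lower any other alternative's Minimax score and cannot raise $a$'s, so $a$ remains minimal among the non-$b$ alternatives, and then invoke $b\notin Minimax(\mathbf{P})$ to conclude. Your explicit case split on whether $a$ is the Condorcet winner in $\mathbf{P}_{-b}$ is a bit more careful than the paper's phrasing---the paper simply asserts that $a$'s score does not increase, which is literally false when $s_{\mathbf{P}_{-b}}(a)<Margin_\mathbf{P}(b,a)<0$, though harmless since $a$ is then the Condorcet winner of $\mathbf{P}$ anyway---but the underlying idea is identical.
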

\begin{proof} Suppose $a$ wins in $\mathbf{P}_{-b}$, so $a$'s Minimax score is minimal in $\mathbf{P}_{-b}$.  If $a$ is the unique winner in $\mathbf{P}_{\mid \{a,b\}}$, so more voters rank $a$ above $b$ in $\mathbf{P}$ than vice versa, then $a$'s Minimax score does not increase from $\mathbf{P}_{-b}$ to $\mathbf{P}$. Moreover, no other alternative's Minimax score can decrease from $\mathbf{P}_{-b}$ to $\mathbf{P}$ by the definition of the Minimax score. Thus, $a$'s Minimax score remains minimal in $\mathbf{P}$ unless $b$ has lower Minimax score than $a$ in $\mathbf{P}$. But by assumption, $b\not\in Minimax(\mathbf{P})$. Hence $a$'s Minimax score remains minimal in $\mathbf{P}$, so $a\in Minimax(\mathbf{P})$.
\end{proof}

\begin{remark} It is important to note that the use of near immunity to spoilers in our main result (Theorem~\ref{MainThm}) could be replaced by Condorcet consistency (Proposition \ref{CondorcetVersion}). We use the framing in terms of spoilers for three reasons. First, near immunity to spoilers is strictly weaker than Condorcet consistency for three alternatives (while Condorcet consistency implies near immunity to spoilers on this domain, the converse implication fails, as shown by, e.g., the trivial voting method that selects all alternatives in every profile or by the Weighted Covering method mentioned above). Second, the framing in terms of immunity to spoilers highlights the desirability of using Minimax for elections with three alternatives, such as the top-three elections advocated in \citealt{Foley2024}, since mitigating spoiler effects is a significant benefit. Third, looking ahead to profiles with more than three alternatives (Section \ref{BeyondThree}), Condorcet consistency does not imply near immunity to spoilers for more than three alternatives, and  Minimax is one of the rare voting methods (along with Split Cycle) satisfying the axiom for any number of alternatives (see Table \ref{AxFig}).\end{remark}

In addition to the substantive axioms of positive involvement and immunity to spoilers, we will use two relatively uncontroversial axioms for our proof. Where $\mathbf{P}$ and $\mathbf{P}'$ are profiles with $V(\mathbf{P})\cap V(\mathbf{P}')=\varnothing$, $\mathbf{P}+\mathbf{P}'$ is the combined profile with $V(\mathbf{P}+\mathbf{P}')=V(\mathbf{P})\cup V(\mathbf{P}')$. Let $2\mathbf{P}$ be $\mathbf{P}+\mathbf{P}'$ where $\mathbf{P}'$ is a copy of $\mathbf{P}$ with a disjoint set of voters.\footnote{Assume a well-ordering $\leqslant$ of $\mathcal{V}$ and pick the first $|V(\mathbf{P})|$-many voters according to $\leqslant$ that do not belong to $V(\mathbf{P})$. For $n>2$, let $n\mathbf{P}=(n-1)\mathbf{P}+\mathbf{P}'$ where $\mathbf{P}'$ is a copy of $\mathbf{P}$ with a set of voters disjoint from that of $(n-1)\mathbf{P}$.}

\begin{definition}\label{HomPres} Let $F$ be a voting method. 
\begin{enumerate}
\item\label{Hom} $F$ satisfies (\textit{inclusion}) \textit{homogeneity} if for any $\mathbf{P}\in\mathrm{dom}(F)$, we have $2\mathbf{P}\in \mathrm{dom}(F)$ and $F(\mathbf{P})\subseteq F(2\mathbf{P})$.
\item $F$ satisfies \textit{block preservation} if for any $\mathbf{P},\mathbf{P}'\in \mathrm{dom}(F)$, if $X(\mathbf{P})=X(\mathbf{P}')$, $V(\mathbf{P})\cap V(\mathbf{P}')=\varnothing$, and $\mathbf{P}'$ contains for each linear order on $X(\mathbf{P})$ exactly one voter submitting that linear order and no other voters, then  $\mathbf{P}+\mathbf{P}'\in \mathrm{dom}(F)$ and $F(\mathbf{P})\subseteq F(\mathbf{P}+\mathbf{P}')$.
\end{enumerate}
\end{definition}
\noindent The homogeneity axiom of \citealt{Smith1973} says that for any positive integer $n$, replacing each voter with $n$ copies of that voter with the same ranking does not change the election outcome. The weaker version in Definition~\ref{HomPres}.\ref{Hom} only concerns doubling and only requires that the winners in $\mathbf{P}$ are included among the winners in $2\mathbf{P}$ (so, e.g., it allows a voting method to select all alternatives in profiles with an even number of voters and do something else in profiles with an odd number of voters). Minimax satisfies Smith's stronger version, but the weaker axiom  suffices for our axiomatic characterization result. Readers who regard this distinction as splitting hairs may henceforth simply think in terms of Smith's  version of homogeneity.

Block preservation says that adding a block of voters who submit one copy of each linear order does not cause any alternative that would have won before to lose after. For three-alternative profiles, this simply means that adding six voters with the rankings $abc$, $acb$, $bac$, $bca$, $cab$, and $cba$ does not cause any previous winner to lose. Minimax satisfies a stronger \textit{block invariance} axiom according to which the set of winners cannot change at all, but again it suffices for our main result to assume only the weaker axiom.\footnote{We do not know of other papers discussing block preservation or block invariance, but they can be derived from other well-known axioms, such as the conjunction of anonymity, neutrality, and \textit{reinforcement}: if  $X(\mathbf{P})=X(\mathbf{P}')$ and $V(\mathbf{P})\cap V(\mathbf{P}')=\varnothing$, then $F(\mathbf{P})\cap F(\mathbf{P}')\neq\varnothing$ implies $F(\mathbf{P}+\mathbf{P}')=F(\mathbf{P})\cap F(\mathbf{P}')$. Anonymity and neutrality  together imply that if $\mathbf{P}'$ consists of exactly one copy of each linear order on $X(\mathbf{P})$, then $F(\mathbf{P}')=X(\mathbf{P}')$ (as does the \textit{cancellation} property of \citealt{Young1974}), in which case reinforcement yields $F(\mathbf{P}+\mathbf{P}')= F(\mathbf{P})\cap X(\mathbf{P}')=F(\mathbf{P})$. However, no voting method that always picks the Condorcet winner uniquely when one exists can satisfy reinforcement (see \citealt[Prop.~2.5]{Zwicker2016}, and cf.~Remark 4.26 of \citealt{HP2023} for doubts about the plausibility of reinforcement), whereas block invariance is widely satisfied.}

\begin{fact}\label{MinimaxHBP} Minimax satisfies homogeneity and block preservation.
\end{fact}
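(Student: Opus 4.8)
The plan is to verify the two properties separately, since both follow quickly from how the Minimax score behaves under the relevant operations on profiles. The key observation throughout is the elementary scaling/additivity fact about margins: for any profile $\mathbf{P}$ and alternatives $a,b$, we have $\mathrm{Margin}_{2\mathbf{P}}(a,b) = 2\cdot\mathrm{Margin}_\mathbf{P}(a,b)$, and if $\mathbf{P}'$ contains exactly one voter for each linear order of $X(\mathbf{P})$, then $\mathrm{Margin}_{\mathbf{P}'}(a,b) = 0$ (the linear orders pair up to cancel), hence $\mathrm{Margin}_{\mathbf{P}+\mathbf{P}'}(a,b) = \mathrm{Margin}_\mathbf{P}(a,b)$. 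In both cases the Minimax score of each alternative transforms in a controlled way: under doubling, every alternative's Minimax score is exactly doubled (the max of doubled margins is twice the max), so the set of score-minimizers is unchanged; under adding a balanced block, every alternative's Minimax score is literally unchanged, so again the set of minimizers is unchanged.

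First I would handle homogeneity. Note $2\mathbf{P}$ is always a profile and Minimax is a total voting method (its domain is all profiles), so $2\mathbf{P}\in\mathrm{dom}(Minimax)$. Then, writing $s_\mathbf{P}(a)=\max_{b\in X(\mathbf{P})}\mathrm{Margin}_\mathbf{P}(b,a)$ for the Minimax score, the margin-doubling identity gives $s_{2\mathbf{P}}(a) = 2\,s_\mathbf{P}(a)$ for every $a$. Since multiplication by $2$ is order-preserving on integers, $a$ minimizes $s_\mathbf{P}$ iff $a$ minimizes $s_{2\mathbf{P}}$, so $Minimax(2\mathbf{P}) = Minimax(\mathbf{P})$; in particular $Minimax(\mathbf{P})\subseteq Minimax(2\mathbf{P})$, which is the required inclusion. (This also shows Minimax satisfies the stronger homogeneity of \citealt{Smith1973}, by the same argument with $n$ in place of $2$.)

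Next I would handle block preservation. Suppose $X(\mathbf{P})=X(\mathbf{P}')$ and $\mathbf{P}'$ consists of exactly one voter per linear order of $X(\mathbf{P})$ and no others. Then $\mathbf{P}+\mathbf{P}'$ is a profile, hence in $\mathrm{dom}(Minimax)$. For each ordered pair $(b,a)$, the linear orders of $X(\mathbf{P})$ split into pairs that are reverses of each other, and within each such pair exactly one order ranks $b$ above $a$ and the other ranks $a$ above $b$; summing over all pairs gives $\mathrm{Margin}_{\mathbf{P}'}(b,a)=0$. By additivity of margins over disjoint voter sets, $\mathrm{Margin}_{\mathbf{P}+\mathbf{P}'}(b,a) = \mathrm{Margin}_\mathbf{P}(b,a)$ for all $a,b$, hence $s_{\mathbf{P}+\mathbf{P}'}(a) = s_\mathbf{P}(a)$ for all $a$, so $Minimax(\mathbf{P}+\mathbf{P}') = Minimax(\mathbf{P})$ and in particular the required inclusion holds. (Again this gives the stronger block invariance mentioned in the text.)

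There is no real obstacle here: the only thing to be slightly careful about is the pairing argument showing a one-copy-per-linear-order block has all margins zero when $|X(\mathbf{P})|\geq 2$ — one must note that the map sending a linear order to its reverse is a fixed-point-free involution on the set of linear orders, so the orders genuinely partition into reverse-pairs. Everything else is the additivity/scaling of margins and the fact that $\max$ and integer multiplication by a positive constant commute and preserve order.
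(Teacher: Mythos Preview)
Your proposal is correct and follows essentially the same approach as the paper: the paper's proof is simply a terser version, stating that doubling the profile doubles all Minimax scores (so the minimizers are unchanged) and that adding a block of all linear orders leaves all scores unchanged. Your additional details (the reversal-pairing argument for why the block has zero margins, and the explicit observation that multiplication by $2$ is order-preserving) are fine elaborations but do not constitute a different route.
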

\begin{proof} The transformation from $\mathbf{P}$ to $2\mathbf{P}$ for homogeneity simply doubles all Minimax scores, so the alternatives with minimal score do not change. The transformation from $\mathbf{P}$ to $\mathbf{P}+\mathbf{P}'$ for block preservation does not change scores at all. \end{proof}

As far as we know, the only voting method in the literature that violates homogeneity is the Dodgson method (see \citealt{Brandt2009}).\footnote{Brandt \citeyearpar{Brandt2009} gives a profile $\mathbf{P}$ for which $Dodgson(\mathbf{P})\neq Dodgson(3\mathbf{P})$; observing in addition that $Dodgson(2\mathbf{P})\not\subseteq Dodgson(4\mathbf{P})$ yields a violation of (inclusion) homogeneity in our formulation.} Violations of block preservation are slightly more common, coming not only from Dodgson but also from Bucklin (see \citealt{Hoag1926}), Coombs (\citealt{Coombs1964}, \citealt{Grofman2004}), and Mediancentre-Borda (\citealt{Cervone2012}).\footnote{For Bucklin, let $\mathbf{P}$ contain $1$ voter with the ranking $abc$ and $3$ with  $cab$. Then since $c$ has a majority of first-place votes, $c$ is the Bucklin winner. However, where $\mathbf{P}'$ consists of each of the six linear orders over $\{a,b,c\}$, in $\mathbf{P}+\mathbf{P}'$ no alternative receives a majority of first-place votes, but $a$ and $c$ both receive a majority of first-or-second-place votes, and $a$ has more first-or-second-place votes ($8$) than $c$ does ($7$), so $a$ is the Bucklin winner. For Coombs, let $\mathbf{P}$ contain $2$ voters with the ranking $abc$, $2$ with $cba$, $2$ with $cab$, and $1$ with $bac$. Then $c$ has a majority of first-place votes, so $c$ is the Coombs winner. But where $\mathbf{P}'$ consists of each of the six linear orders over $\{a,b,c\}$, in $\mathbf{P}+\mathbf{P}'$ no alternative receives a majority of first-place votes, and $c$ has the most last-place votes, so $c$ is eliminated in the first round (and then $a$ wins). Finally, for Mediancentre-Borda, start with the profile labeled ``Irresolute OWM Failure'' in the simulation accompanying \citealt{Cervone2012} at   \href{https://omega.math.union.edu/research/2010-05-voting/}{https://omega.math.union.edu/research/2010-05-voting/}; then adding one copy of each linear order causes the original winner to lose.} However, clearly all  scoring rules (\citealt{Young1975}) and voting methods based on majority margins (see, e.g., \citealt{HP2023}) satisfy block preservation. Moreover, the methods cited above that violate block preservation also violate immunity to spoilers, so block preservation is not doing any special work by ruling out those methods.

We are now ready to state and prove our main result.\footnote{Just before submitting the final version of this paper to the publisher, we learned from Chris Dong (personal communication) that our preprint of this paper (\citealt{HPMay}) inspired a related theorem in his recent co-authored preprint, \citealt{Brandt2024}. The theorem in \citealt{Brandt2024} (Theorem~5) replaces May's axioms and block preservation in Theorem \ref{MainThm} by an axiom of \textit{singleton negative involvement} (see their Proposition 1): if $F(\mathbf{P})\neq \{a\}$, and $\mathbf{P}'$ is obtained from $\mathbf{P}$ by adding one voter who ranks $a$ uniquely last, then $F(\mathbf{P}')\neq \{a\}$. Their theorem also replaces near immunity to spoilers by Condorcet consistency, but note as in Proposition \ref{CondorcetVersion} below that Theorem \ref{MainThm} can be formulated in this way as well.}

\begin{theorem}\label{MainThm} Let $F$ be a voting method on the domain of profiles with two or three alternatives such that 
\begin{itemize}
\item $F$ satisfies anonymity, neutrality, and weak positive responsiveness. 
\end{itemize}
If in addition 
\begin{itemize}
\item $F$ satisfies positive involvement, near immunity to spoilers,  homogeneity, and  block preservation,
\end{itemize} then $F$ refines Minimax.
\end{theorem}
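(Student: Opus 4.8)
The plan is to reduce the claim to a finite case analysis over the possible ``shapes'' of the head-to-head margin relation on three alternatives, and in each shape use the May-style axioms together with the four additional axioms to pin down the winner set. By May's Theorem, the restriction of $F$ to two-alternative profiles is majority voting, so for any three-alternative profile $\mathbf{P}$ and any $a,b\in X(\mathbf{P})$ we know $F(\mathbf{P}_{\mid\{a,b\}})=\{a\}$ iff $\mathrm{Margin}_\mathbf{P}(a,b)>0$ and $F(\mathbf{P}_{\mid\{a,b\}})=\{a,b\}$ iff the margin is $0$. First I would use homogeneity and block preservation to normalize: adding the block of all six linear orders does not change margins and only enlarges the winner set, and doubling the profile scales margins by $2$; combined with anonymity and neutrality, this lets me reduce to a profile whose margin triple is of a convenient canonical form, while showing that if $F$ picks a non-Minimax alternative in the original profile it also does so in the normalized one (so it suffices to rule out non-Minimax winners there).

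Next I would organize the argument by whether the margin graph on $\{a,b,c\}$ has a Condorcet winner, a majority tie somewhere, or a cycle. If there is a Condorcet winner $m$, then near immunity to spoilers (applied twice, peeling off the two non-winners one at a time, using that $F(\mathbf{P}_{\mid\{m,x\}})=\{m\}$ for each loser $x$, and that in the two-alternative subprofile a loser is not selected) forces $m\in F(\mathbf{P})$; to get that $F(\mathbf{P})=\{m\}$, i.e.\ that no loser $x$ is also selected, I would invoke near immunity to spoilers in the contrapositive together with weak positive responsiveness, exactly as in the proof of May's Theorem: starting from a profile where $m$ is the Condorcet winner, flip voters and apply weak positive responsiveness to derive that $x$ cannot be a co-winner. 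The genuinely interesting cases are the cyclic ones, where Minimax selects the alternative with the smallest maximal loss; here near immunity to spoilers gives no traction (no subprofile has a unique winner forced by the Condorcet criterion), so I expect to lean on positive involvement. The idea is: given a cyclic profile $\mathbf{P}$ with, say, $c$ the unique Minimax winner, suppose $a\in F(\mathbf{P})$ with $a\neq c$; then add to $\mathbf{P}$ a carefully chosen new voter ranking $a$ uniquely first, obtaining $\mathbf{P}'$ in which the margins have shifted so that $a$ is now a Condorcet loser or at least strictly worse, while $a\in F(\mathbf{P}')$ by positive involvement; iterating and combining with the Condorcet-winner case already handled (and with weak positive responsiveness to collapse ties) should yield a contradiction.

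The main obstacle, I expect, is the cyclic analysis: getting the bookkeeping right so that a single added voter (or a bounded number of them, after homogenizing to make the margins large and even) transforms the margin triple into one where the already-settled cases apply, \emph{and} simultaneously respecting the exact form of positive involvement (the new voter must rank the target alternative \emph{uniquely first}, which constrains how the other two alternatives' mutual margin moves). This is presumably where the precise strength of the axioms matters — in particular why weak positive responsiveness, rather than a weaker non-negative version, is needed, since it is what lets ties be broken and contradictions extracted. The footnote's remark about ``defensible'' Minimax losers (the ``descending case'') signals that near immunity and weak Condorcet consistency alone are insufficient and that block preservation plus homogeneity plus weak positive responsiveness must be deployed jointly to exclude those alternatives; I would handle that sub-case by homogenizing to a profile with a strictly dominant Minimax winner and then running the weak-positive-responsiveness contradiction argument. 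Throughout, anonymity and neutrality are used freely to assume WLOG which alternative is the Minimax winner and to move between profiles with the same margin data.
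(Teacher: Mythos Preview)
Your overall architecture---case split on the margin graph, settle the Condorcet-winner case first, then in cyclic cases use positive involvement to push toward a Condorcet-winner configuration---matches the paper. But two steps in your sketch do not go through as written.

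\textbf{The Condorcet-winner case.} You propose to show $m\in F(\mathbf{P})$ by ``peeling off the two non-winners'' with near immunity to spoilers. That cannot work directly: the hypothesis of near immunity includes $b\notin F(\mathbf{P})$, which is precisely what you are trying to establish. The paper's actual maneuver is the reverse. Assume some loser $a\neq m$ lies in $F(\mathbf{P})$; use homogeneity (to $4\mathbf{P}$) and block preservation (adding one block of all linear orders) so that a voter with ranking $mba$ is available; then flip that one voter to $amb$, which by weak positive responsiveness forces $F(\mathbf{P}'')=\{a\}$. Only now is the third alternative $b$ excluded, so near immunity to spoilers applies and yields $m\in F(\mathbf{P}'')$, a contradiction. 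The point is that weak positive responsiveness is used \emph{before} near immunity, to manufacture the singleton needed for the $b\notin F$ hypothesis; your order of operations is inverted.

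\textbf{The descending cycle.} Your plan here---``homogenize to a profile with a strictly dominant Minimax winner and then run the weak-positive-responsiveness contradiction argument''---is not a plan: homogeneity only scales all margins, so a descending cycle stays descending, and no ``strictly dominant'' structure appears. The difficulty the footnote flags is real: in a descending cycle the Minimax loser $a$ is \emph{defensible}, so adding voters who rank $a$ first (the positive-involvement move you used in the ascending case) will not create a Condorcet winner other than $a$. The paper instead transforms the descending cycle into a strictly ascending one by a four-step construction: pass to $8\mathbf{P}$; add $\ell+1$ blocks of all linear orders (block preservation); flip one $bac$ voter to $acb$ (this is where weak positive responsiveness will later be applied in reverse); and delete $\ell$ further $bac$ voters (this is where positive involvement will later be applied in reverse). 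With $\ell=4(k-m)$ the resulting profile $\mathbf{P}'''$ is strictly ascending, so the already-proved ascending case gives $F(\mathbf{P}''')=\{b\}$; one then walks back through the construction---positive involvement adds the $\ell$ $bac$ voters (each ranks $b$ first), weak positive responsiveness undoes the single flip, block preservation removes the blocks, and homogeneity descends from $8\mathbf{P}$ to $\mathbf{P}$---to conclude $F(\mathbf{P})=\{b\}$. Your sketch contains none of this, and without it the descending case (which is exactly the ``defensible Minimax loser'' obstruction) remains open.

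A minor point: in the ascending case you write that adding a voter ranking $a$ first makes $a$ ``a Condorcet loser or at least strictly worse.'' It does not; it improves $a$'s margins. What it does---if the new voter's full ranking is $acb$---is shrink $b$'s margin over $c$ until $c$ becomes the Condorcet winner, and then the already-settled Condorcet case gives $F=\{c\}$, contradicting $a\in F$ via positive involvement. The mechanism is right but your description of it is backward.
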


\begin{proof} Assume $F$ satisfies the listed axioms. Then by Theorem \ref{May}, $F$ agrees with majority voting and hence with Minimax on all two-alternative profiles. Next we claim that for any three-alternative profile $\mathbf{P}$ in which a Condorcet winner $c$ exists, we have $F(\mathbf{P})=\{c\}$, in agreement with Minimax. Suppose $a\in F(\mathbf{P})$ for $a\neq c$. Then by homogeneity twice, $a\in F(4\mathbf{P})$. Let $\mathbf{P}'$ be obtained from $4\mathbf{P}$ by adding a block of all linear orders, so by block preservation, $a\in F(\mathbf{P}')$. Let $\mathbf{P}''$ be obtained from $\mathbf{P}'$ by changing one $cba$ ranking to $acb$, so by weak positive responsiveness, $F(\mathbf{P}'')=\{a\}$. Toward applying near immunity to spoilers, observe that since the margin of $c$ over $a$ in $\mathbf{P}$ is at least $1$, the margin of $c$ over $a$ in $4\mathbf{P}$ and hence $\mathbf{P}'$ is at least 4, so the margin of $c$ over $a$ in $\mathbf{P}''$ is at least $2$. Hence by Theorem~\ref{May}, we have $F(\mathbf{P}''_{-b})=\{c\}$. In addition,  since $c$ has a positive margin over $b$ in $\mathbf{P}''$, we have $F(\mathbf{P}''_{\mid \{c,b\}})=\{c\}$ by Theorem~\ref{May}. Then since  $b\not\in F(\mathbf{P}'')$ (given $F(\mathbf{P}'')=\{a\}$), it follows by near immunity to spoilers that $c\in F(\mathbf{P}'')$, contradicting $F(\mathbf{P}'')=\{a\}$.

Now suppose $\mathbf{P}$ is a three-alternative profile with no Condorcet winner. The \textit{margin graph} of a profile $\mathbf{P}$ is the weighted directed graph with an edge from $a$ to $b$ if $a$ has a positive margin over $b$, weighted by that margin. It follows that the margin graph of $\mathbf{P}$ is of one of the following four forms, where the dashed arrows might not be present, due to a possibly zero margin, and the absence of an arrow indicates a zero margin:
\begin{center}
\begin{tikzpicture}

\node[circle,draw, minimum width=0.25in] at (0,0) (a) {$a$}; 
\node[circle,draw,minimum width=0.25in] at (3,0) (c) {$c$}; 
\node[circle,draw,minimum width=0.25in] at (1.5,1.5) (b) {$b$}; 

\path[->,draw,thick, densely dashed] (b) to node[fill=white] {$m$} (c);
\path[->,draw,thick] (c) to node[fill=white] {$k$} (a);
\path[->,draw,thick, densely dashed] (a) to node[fill=white] {$n$} (b);

\node at (1.5,-.75) {{$0\leq n\leq m <k$}};
\node at (1.5,-1.5) {{ascending case}};

\end{tikzpicture}\qquad\qquad\begin{tikzpicture}

\node[circle,draw, minimum width=0.25in] at (0,0) (a) {$a$}; 
\node[circle,draw,minimum width=0.25in] at (3,0) (c) {$c$}; 
\node[circle,draw,minimum width=0.25in] at (1.5,1.5) (b) {$b$}; 

\path[->,draw,thick] (b) to node[fill=white] {$k$} (c);
\path[->,draw,thick] (c) to node[fill=white] {$m$} (a);
\path[->,draw,thick,densely dashed] (a) to node[fill=white] {$n$} (b);

\node at (1.5,-.75) {{$0\leq n<m\leq k$}};
\node at (1.5,-1.5) {{descending case}};

\end{tikzpicture}\vspace{.2in}

\begin{tikzpicture}

\node[circle,draw, minimum width=0.25in] at (0,0) (a) {$a$}; 
\node[circle,draw,minimum width=0.25in] at (3,0) (c) {$c$}; 
\node[circle,draw,minimum width=0.25in] at (1.5,1.5) (b) {$b$}; 

\path[->,draw,thick] (b) to node[fill=white] {$n$} (a);
\path[->,draw,thick] (c) to node[fill=white] {$k$} (a);

\node at (1.5,-.75) {{$0< n\leq k$}};
\node at (1.5,-1.5) {{Condorcet loser case}};

\end{tikzpicture}\qquad\qquad\begin{tikzpicture}

\node[circle,draw, minimum width=0.25in] at (0,0) (a) {$a$}; 
\node[circle,draw,minimum width=0.25in] at (3,0) (c) {$c$}; 
\node[circle,draw,minimum width=0.25in] at (1.5,1.5) (b) {$b$}; 

\path[->,draw,thick, densely dashed] (b) to node[fill=white] {$k$} (c);
\path[->,draw,thick, densely dashed] (c) to node[fill=white] {$m$} (a);
\path[->,draw,thick, densely dashed] (a) to node[fill=white] {$n$} (b);

\node at (1.5,-.75) {{$0\leq n=m=k$}};
\node at (1.5,-1.5) {{symmetric case}};

\end{tikzpicture}
\end{center}
Note that we use `$a$', `$b$', and `$c$' above as variables---so in the ascending case, we simply call the alternative with the largest loss `$a$', etc. To see that the above four cases are exhaustive, assuming there is no Condorcet winner, observe that if the number of edges in the margin graph is zero, then we are in symmetric case; if the number of edges is one, then we are in the ascending case with $0=n=m$; if the number of edges is two, then we are in either the Condorcet loser case or the ascending case or descending case with $0=n<m$; and if the number of edges is three, then we are in either the ascending, descending, or symmetric case.

Case 1: the margin graph of $\mathbf{P}$ is as in the \textit{symmetric case}. Then $\mathrm{Minimax}(\mathbf{P})=\{a,b,c\}\supseteq F(\mathbf{P})$.

Case 2: the margin graph of $\mathbf{P}$ is as in the \textit{ascending case}. We claim that $F(\mathbf{P})\subseteq\{b,c\}$.  Suppose for contradiction that $a\in F(\mathbf{P})$. By homogeneity, $a\in F(2\mathbf{P})$. Let $\mathbf{P}'$ be obtained from $2\mathbf{P}$ by adding $2m+1$ voters with the ranking $acb$. Since $m<k$, $2k- 2m-1>0$, so the margin graph of $\mathbf{P}'$ is 
\begin{center}
\begin{tikzpicture}

\node[circle,draw, minimum width=0.25in] at (0,0) (a) {$a$}; 
\node[circle,draw,minimum width=0.25in] at (4,0) (c) {$c$}; 
\node[circle,draw,minimum width=0.25in] at (2,2) (b) {$b$}; 

\path[->,draw,thick] (c) to node[fill=white] {$1$} (b);
\path[->,draw,thick] (c) to node[fill=white] {$2k-2m-1$} (a);
\path[->,draw,thick] (a) to node[fill=white] {$2n+2m+1$} (b);

\end{tikzpicture}
\end{center}
By repeated application of positive involvement, $a\in F(\mathbf{P}')$. Yet $c$ is a Condorcet winner, so by the first paragraph of the proof, $F(\mathbf{P}')=\{c\}$, so we have a contradiction. 

Thus, $F(\mathbf{P})\subseteq\{b,c\}$. Now if $n=m$, then $\mathrm{Minimax}(\mathbf{P})=\{b,c\}$, so $F(\mathbf{P})\subseteq \mathrm{Minimax}(\mathbf{P})$, so we are done.

Now suppose $n<m$, in which case we call this a \textit{strictly ascending case}. Then we claim that $F(\mathbf{P})=\{b\}$ in line with Minimax. Suppose for contradiction that $c\in F(\mathbf{P})$. By homogeneity, $c\in F(2\mathbf{P})$.  The margin graph of $2\mathbf{P}$~is
\begin{center}
\begin{tikzpicture}

\node[circle,draw, minimum width=0.25in] at (0,0) (a) {$a$}; 
\node[circle,draw,minimum width=0.25in] at (3,0) (c) {$c$}; 
\node[circle,draw,minimum width=0.25in] at (1.5,1.5) (b) {$b$}; 

\path[->,draw,thick] (b) to node[fill=white] {$2m$} (c);
\path[->,draw,thick] (c) to node[fill=white] {$2k$} (a);
\path[->,draw,thick] (a) to node[fill=white] {$2n$} (b);

\end{tikzpicture}
\end{center}
Let $\mathbf{P}'$ be obtained from $2\mathbf{P}$ by adding $2n+1$ voters with the ranking $cba$. Since $n<m$, $2m-2n-1>0$, so the margin graph of $\mathbf{P}'$ is
\begin{center}
\begin{tikzpicture}

\node[circle,draw, minimum width=0.25in] at (0,0) (a) {$a$}; 
\node[circle,draw,minimum width=0.25in] at (4,0) (c) {$c$}; 
\node[circle,draw,minimum width=0.25in] at (2,2) (b) {$b$}; 

\path[->,draw,thick] (b) to node[fill=white] {$2m-2n-1$} (c);
\path[->,draw,thick] (c) to node[fill=white] {$2k+2n+1$} (a);
\path[->,draw,thick] (b) to node[fill=white] {$1$} (a);

\end{tikzpicture}
\end{center}
By repeated application of positive involvement, $c\in F(\mathbf{P}')$. Yet $b$ is a Condorcet winner, so by the first paragraph of the proof, $F(\mathbf{P}')=\{b\}$, so we have a contradiction. Thus, $F(\mathbf{P})=\{b\}$, as desired.

Case 3: the margin graph of $\mathbf{P}$ is as in the \textit{Condorcet loser case}. In this case, $Minimax(\mathbf{P})=\{b,c\}$, and the argument that $F(\mathbf{P})\subseteq\{b,c\}$ is the same as in the first paragraph of the ascending case, only $\mathrm{Margin}_\mathbf{P}(a,b)$ is negative, which is irrelevant to the argument.

Case 4: the margin graph of $\mathbf{P}$ is as in the \textit{descending case}:
\begin{center}
\begin{tikzpicture}

\node[circle,draw, minimum width=0.25in] at (0,0) (a) {$a$}; 
\node[circle,draw,minimum width=0.25in] at (3,0) (c) {$c$}; 
\node[circle,draw,minimum width=0.25in] at (1.5,1.5) (b) {$b$}; 

\path[->,draw,thick] (b) to node[fill=white] {$k$} (c);
\path[->,draw,thick] (c) to node[fill=white] {$m$} (a);
\path[->,draw,thick, densely dashed] (a) to node[fill=white] {$n$} (b);

\end{tikzpicture}
\end{center}
where $n<m\leq k$. We claim that $F(\mathbf{P})=\{b\}$ in agreement with Minimax. Consider $8\mathbf{P}$:
\begin{center}
\begin{tikzpicture}

\node[circle,draw, minimum width=0.25in] at (0,0) (a) {$a$}; 
\node[circle,draw,minimum width=0.25in] at (3,0) (c) {$c$}; 
\node[circle,draw,minimum width=0.25in] at (1.5,1.5) (b) {$b$}; 

\path[->,draw,thick] (b) to node[fill=white] {$8k$} (c);
\path[->,draw,thick] (c) to node[fill=white] {$8m$} (a);
\path[->,draw,thick,densely dashed] (a) to node[fill=white] {$8n$} (b);

\end{tikzpicture}
\end{center}

\noindent Let  $\ell = 4(k-m)$. Let $\mathbf{P}'$ be obtained from $8\mathbf{P}$ by adding $\ell+1$ blocks of all linear orders.  Let $\mathbf{P}''$ be obtained from $\mathbf{P}'$ by $1$ voter with the ranking $bac$ changing to $acb$, thereby increasing the margin of $a$ over $b$ by 2 and decreasing the margin of $b$ over $c$ by $2$. Let $\mathbf{P}'''$ be obtained from $\mathbf{P}''$ by removing $\ell$ voters with the ranking $bac$. Then the margin graph of $\mathbf{P}'''$ is:
\begin{center}
\begin{tikzpicture}

\node[circle,draw, minimum width=0.25in] at (0,0) (a) {$a$}; 
\node[circle,draw,minimum width=0.25in] at (5,0) (c) {$c$}; 
\node[circle,draw,minimum width=0.25in] at (2.5,2.5) (b) {$b$}; 

\path[->,draw,thick] (b) to node[fill=white] {$8k-\ell-2$} (c);
\path[->,draw,thick] (c) to node[fill=white] {$8m+\ell$} (a);
\path[->,draw,thick] (a) to node[fill=white] {$8n+\ell+2$} (b);

\end{tikzpicture}
\end{center} 
We claim that $8n+\ell+2<8k-\ell-2<8m+\ell$, so the margin graph of $\mathbf{P}'''$ is a strictly ascending cycle.  Since $n<m$, we have  $n+1\leq m$, so $8(n+1)=8n+8\leq 8m$ and hence $8n+4 < 8m$, which implies $8n+4k - 4m +2<8k-4k + 4m-2$, which is the first inequality in our claim. The second inequality is simply $8k-4k + 4m - 2 < 8m + 4k - 4m$, which is immediate.

Thus, the margin graph of $\mathbf{P}'''$ is a strictly ascending cycle, so by Case~2, $F(\mathbf{P}''')=\{b\}$, so by $\ell$ applications of positive involvement, $b\in F(\mathbf{P}'')$, so by weak positive responsiveness, $F(\mathbf{P}')=\{b\}$, so by $\ell+1$ applications of block preservation, $F(8\mathbf{P})=\{b\}$, so by homogeneity, $F(\mathbf{P})=\{b\}$, in line with Minimax.\end{proof}

\begin{remark}\label{Domains} In Section \ref{Prelim}, we assumed in the tradition of Arrow \citeyearpar{Arrow1963} that the domain of a voting method is the set of profiles of strict weak orders. However, inspection of the proofs of Facts \ref{MinimaxMay}, \ref{MinimaxPI}, \ref{MinimaxIS}, and \ref{MinimaxHBP} and of Theorem \ref{MainThm} shows that for these results, we do not actually need to assume any properties of a voter's binary relation $\mathbf{P}(i)$ except that it is a strict preference relation, i.e., it is asymmetric.\footnote{When we say that $i$ ``ranks $a$ uniquely first'', we simply mean that $(a,b)\in\mathbf{P}(i)$ for all $b\in X(\mathbf{P})\setminus\{a\}$ (and similarly for ``ranks $a$ uniquely last'' but with $(b,a)\in\mathbf{P}(i)$), not that $\mathbf{P}(i)$ forms a transitive ranking.} In particular, all of these results also apply to the setting of linear ballots, where ties are not allowed, to the setting of truncated ballots, where some alternatives may be unranked (regarded as tied for last), and to the setting where instead of asking voters for a ranking, we simply ask for their pairwise choices (Do you prefer $a$ or $b$? Do you prefer $a$ or $c$? Do you prefer $b$ or $c$?), which allows for intransitivities in $\mathbf{P}(i)$. This latter point is especially important given the recent proposal of Foley and Maskin \citeyearpar{Foley2024} that in top-three general elections, we should ask voters to answer three pairwise comparison questions rather than to provide a ranking of the three alternatives. Theorem \ref{MainThm} continues to apply in this setting.\end{remark}

Thus, for three-alternative profiles, if in addition to May's axioms we want the axioms of positive involvement and immunity to spoilers (or just near immunity), plus the relatively uncontroversial homogeneity and block preservation axioms, then we must refine Minimax. Note that in the proof of Theorem \ref{MainThm}, the only use of near immunity to spoilers is to help derive (together with other axioms) that $F$ satisfies Condorcet consistency. Hence we have the following reformulation of Theorem \ref{MainThm}.

\begin{proposition}\label{CondorcetVersion} Theorem \ref{MainThm} holds with near immunity to spoilers replaced by Condorcet consistency.
\end{proposition}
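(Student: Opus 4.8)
The plan is to reuse the proof of Theorem \ref{MainThm} essentially verbatim, after locating where near immunity to spoilers actually enters. Inspecting that proof, near immunity to spoilers is invoked exactly once --- in the first paragraph, to conclude that $F(\mathbf{P}) = \{c\}$ whenever $\mathbf{P}$ is a three-alternative profile with a Condorcet winner $c$. Combined with the consequence of Theorem \ref{May} that $F$ agrees with majority voting, and hence is Condorcet consistent, on two-alternative profiles, the first paragraph thus establishes nothing more than: $F$ is Condorcet consistent. Every later appeal in the proof --- in the ascending case, the Condorcet loser case, and the descending case (which in turn reduces to the ascending case) --- is to ``the first paragraph of the proof,'' i.e.\ to this Condorcet-consistency conclusion, and never again to near immunity to spoilers. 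So the first step is simply to replace the hypothesis of near immunity to spoilers by Condorcet consistency; then the first paragraph of the proof becomes immediate (for the three-alternative case by the new hypothesis, for the two-alternative case by Theorem \ref{May}), and Cases 1--4 carry over word for word, since they rely only on anonymity, neutrality, weak positive responsiveness, positive involvement, homogeneity, block preservation, and the Condorcet-winner conclusion just secured.

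Equivalently, one can package this as a genuine corollary of Theorem \ref{MainThm}: assuming $F$ satisfies anonymity, neutrality, weak positive responsiveness, positive involvement, Condorcet consistency, homogeneity, and block preservation, one shows $F$ satisfies near immunity to spoilers and then quotes Theorem \ref{MainThm}. For this one checks the nontrivial instances: if $X(\mathbf{P}) = \{a,b,c\}$ with $F(\mathbf{P}_{-b}) = \{a\}$, $F(\mathbf{P}_{\mid \{a,b\}}) = \{a\}$, and $b \notin F(\mathbf{P})$, then by Theorem \ref{May} (applied to the two-alternative restrictions) $a$ has a positive margin over both $b$ and $c$ in $\mathbf{P}$, so $a$ is the Condorcet winner and Condorcet consistency gives $F(\mathbf{P}) = \{a\} \ni a$; for two-alternative $\mathbf{P}$ the hypotheses of near immunity already force $F(\mathbf{P}) = \{a\}$. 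Either route gives the result.

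There is no substantive obstacle; the content is bookkeeping. The only points requiring care are (i) confirming that the first paragraph of the proof of Theorem \ref{MainThm} yields the \emph{full} Condorcet consistency of $F$ --- which it does, once the two-alternative case furnished by Theorem \ref{May} is added to its three-alternative conclusion --- and (ii) confirming by inspection that none of Cases 1--4 invokes near immunity to spoilers in any disguised form. Both are routine to check.
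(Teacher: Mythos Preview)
Your proposal is correct and your primary route is exactly the paper's own argument: the paper simply notes that the sole use of near immunity to spoilers in the proof of Theorem~\ref{MainThm} is to establish Condorcet consistency in the first paragraph, so assuming Condorcet consistency directly lets the rest of the proof go through unchanged. Your alternative packaging (deriving near immunity to spoilers from Condorcet consistency plus May's axioms and then quoting Theorem~\ref{MainThm}) is also valid and is essentially the observation the paper makes in its earlier remark that Condorcet consistency implies near immunity to spoilers on this domain.
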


\begin{remark}\label{SupportVMargin} Theorem \ref{MainThm} bears on a debate about how Minimax should be formulated. When an alternative $b$ has a positive margin over $a$, should we measure the size of $a$'s loss by the \textit{margin} of $b$ over $a$, as we have, or rather by the \textit{support} for $b$ over $a$, defined as the absolute number of voters who rank $b$ above $a$ (cf.~\citealt[p.~269]{Schulze2011})? If voters' rankings are all strict linear orders, the choice does not matter, but in profiles allowing ties (even if only at the bottom of voters' rankings), the two approaches are not equivalent. In fact, Minimax formulated in terms of support rather than margin violates positive involvement. Consider the following profile (where the number above each ranking indicates the number of voters with that ranking, and `$b,c$' indicates a tie) with its margin graph (center) and support graph (right):

\begin{center}
\begin{minipage}{1.1in}
\begin{tabular}{ccc}
 8 & 6  & 4   \\ \hline 
 $b$ & $a$ & $c$   \\
  $c$ & $b,c$  & $a$   \\
   $a$ &  & $b$       
\end{tabular}\end{minipage}\begin{minipage}{1.75in}\begin{tikzpicture}

\node[circle,draw, minimum width=0.25in] at (0,0) (a) {$a$}; 
\node[circle,draw,minimum width=0.25in] at (3,0) (c) {$c$}; 
\node[circle,draw,minimum width=0.25in] at (1.5,1.5) (b) {$b$}; 

\path[->,draw,thick] (b) to node[fill=white] {$4$} (c);
\path[->,draw,thick] (c) to node[fill=white] {$6$} (a);
\path[->,draw,thick] (a) to node[fill=white] {$2$} (b);

\end{tikzpicture}\end{minipage}\begin{minipage}{1.75in}\begin{tikzpicture}

\node[circle,draw, minimum width=0.25in] at (0,0) (a) {$a$}; 
\node[circle,draw,minimum width=0.25in] at (3,0) (c) {$c$}; 
\node[circle,draw,minimum width=0.25in] at (1.5,1.5) (b) {$b$}; 

\path[->,draw,thick] (b) to node[fill=white] {$8$} (c);
\path[->,draw,thick] (c) to node[fill=white] {$12$} (a);
\path[->,draw,thick] (a) to node[fill=white] {$10$} (b);

\end{tikzpicture}\end{minipage}
\end{center}
Support-based Minimax says that $c$ has the ``smallest loss'', so $c$ wins. But then as in the strictly ascending case of the proof of Theorem \ref{MainThm}, adding $3$ voters with the ranking $cba$ makes $b$ the Condorcet winner and hence the support-based Minimax winner, violating positive involvement. We take this to be an argument in favor of the margin-based formulation of Minimax instead of the support-based formulation.\end{remark}

Finally, we note that there are indeed \textit{proper} refinements of Minimax satisfying the axioms in Theorem~\ref{MainThm}. Since the issue of how to break Minimax ties is much less important than constraining a voting method to choose from among the Minimax winners, we defer discussion of such refinements to the Appendix. However, it is noteworthy that with the refinement defined in the Appendix, we satisfy not only weak positive responsiveness but even full positive responsiveness as in Remark \ref{PRremark}. Yet Proposition \ref{AppendixProp} in the Appendix shows that there is a tradeoff between satisfying positive responsiveness (rather than just weak positive responsiveness) and immunity to spoilers (rather than just near immunity to spoilers).

\section{Independence of axioms}\label{IndependenceSection}

In this section, we turn to the question of the independence of the axioms in Theorem \ref{MainThm}. First, however, we claim that we can drop May's axiom of anonymity from our list of axioms. Anonymity is only used in the proof of Theorem~\ref{MainThm} to help derive that the voting method refines majority voting in two-alternative profiles. But we can derive this result without anonymity as follows.

\begin{proposition}\label{DropAnon} If $F$ satisfies neutrality, weak positive responsiveness, homogeneity, and block preservation, then $F$ agrees with majority voting in any two-alternative profile with a unique majority winner.
\end{proposition}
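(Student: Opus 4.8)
The plan is to follow the structure of the proof of Theorem~\ref{May}, replacing the appeal to anonymity in Case~2 of that argument---which was used to conclude that two two-alternative profiles with the same number of voters ranking $a$ above $b$, $b$ above $a$, and $a,b$ tied have the same winner set---by a detour through homogeneity and block preservation. Throughout, describe a two-alternative profile over $\{a,b\}$ by the numbers $p$, $q$, $r$ of voters ranking $a$ above $b$, ranking $b$ above $a$, and ranking $a,b$ tied, respectively; ``unique majority winner $a$'' means $p>q$.

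First I would establish a base case: \emph{if $\mathbf{P}_0$ is a two-alternative profile over $\{a,b\}$ with $p\geq 1$ and $q=0$ (any $r\geq 0$), then $F(\mathbf{P}_0)=\{a\}$.} This uses only neutrality and weak positive responsiveness. Suppose $b\in F(\mathbf{P}_0)$. Since $q=0$, every non-tied voter ranks $b$ uniquely last; flipping them one at a time, each flip keeps $b$ a winner (indeed, after the first flip $F$ equals $\{b\}$, so weak positive responsiveness continues to apply), and after all $p$ flips we reach the profile $\overline{\mathbf{P}_0}$ obtained from $\mathbf{P}_0$ by the permutation $\tau$ of $\mathcal{X}$ that swaps $a$ and $b$ (keeping the same voters)---here using $q=0$, so that no voter had to be moved the ``wrong'' way. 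Hence $F(\overline{\mathbf{P}_0})=\{b\}$; but neutrality gives $\tau[F(\mathbf{P}_0)]=F(\overline{\mathbf{P}_0})$, so $a=\tau(b)\in F(\overline{\mathbf{P}_0})=\{b\}$, a contradiction. Therefore $b\notin F(\mathbf{P}_0)$, so $F(\mathbf{P}_0)=\{a\}$.

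For the general case, let $\mathbf{P}$ have $p>q$ and suppose for contradiction $b\in F(\mathbf{P})$. By homogeneity applied twice, $b\in F(4\mathbf{P})$, and $4\mathbf{P}$ has the numbers $4p,4q,4r$. Let $\mathbf{P}'$ be obtained from $4\mathbf{P}$ by one voter changing from ranking $a$ above $b$ to ranking $b$ above $a$; then $4\mathbf{P}$ is obtained back from $\mathbf{P}'$ by that voter moving $a$ from uniquely last to uniquely first, so by weak positive responsiveness it suffices to show $a\in F(\mathbf{P}')$, since then $F(4\mathbf{P})=\{a\}$, contradicting $b\in F(4\mathbf{P})$. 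Now $\mathbf{P}'$ has $4p-1$ voters ranking $a$ above $b$ and $4q+1$ ranking $b$ above $a$, with $4p-1\geq 4q+1$ because $p>q$; hence $\mathbf{P}'$ can be written as $\mathbf{P}''$ together with $4q+1$ disjoint ``blocks'' (for two alternatives, each block being one voter ranking $a$ above $b$ and one ranking $b$ above $a$), where $\mathbf{P}''$ has $4(p-q)-2\geq 2$ voters ranking $a$ above $b$, none ranking $b$ above $a$, and $4r$ tied voters. By the base case $F(\mathbf{P}'')=\{a\}$, and $4q+1$ applications of block preservation give $\{a\}=F(\mathbf{P}'')\subseteq F(\mathbf{P}')$, so $a\in F(\mathbf{P}')$, completing the proof.

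The step I expect to be the main obstacle is the general case: lacking anonymity, one cannot argue by vote-counts alone, so the only way to ``normalize'' $\mathbf{P}$ is via homogeneity and block preservation. The key ideas are (i) detouring through $\mathbf{P}'$ and recovering $4\mathbf{P}$ by a single weak-positive-responsiveness move, and (ii) peeling off balanced pairs of voters from $\mathbf{P}'$ to land exactly in the base case; the factor $4$ is chosen so that after removing $4q+1$ blocks the remaining profile $\mathbf{P}''$ still has at least one voter (indeed at least two) ranking $a$ above $b$---this would fail if one worked with $\mathbf{P}$ directly when $p-q\leq 2$. It remains only to note the routine points that every profile appearing lies in $\mathrm{dom}(F)$ (which, in the setting of Theorem~\ref{MainThm}, consists of all profiles with two or three alternatives and is closed under the operations used, per the homogeneity and block-preservation axioms), and that the successive flips in the base case are each licensed by weak positive responsiveness because $b$ stays a winner.
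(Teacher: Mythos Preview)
Your proof is correct and follows essentially the same approach as the paper's: establish a base case for profiles with no $b$-over-$a$ voters via neutrality and weak positive responsiveness, use block preservation to add balanced $\{ab,ba\}$ pairs, and use a single weak-positive-responsiveness flip to recover the target profile. The only difference is organizational---you apply homogeneity twice upfront (going to $4\mathbf{P}$) so that the base profile $\mathbf{P}''$ always has at least two $a$-over-$b$ voters, whereas the paper handles the case $k>1$ directly without homogeneity and reduces $k=1$ to $k=2$ via one application of homogeneity.
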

\begin{proof} Suppose $\mathbf{P}$ is a two-alternative profile with a unique majority winner. Hence there are $a,b\in X(\mathbf{P})$ such that $n$ voters rank $b$ above $a$, $n+k$ (for $k>0$) rank $a$ above $b$, and the remaining $\ell$ voters rank them as tied. Let $\mathbf{Q}=4\mathbf{P}$. If we can show that $F(\mathbf{Q})=\{a\}$, then by homogeneity, $F(\mathbf{P})=\{a\}$, in agreement with majority voting, as desired.

Let $\alpha=4n$, $\beta=4k$, and $\gamma=4\ell$. Since $k>0$, we have $\beta \geq 4$. In $\mathbf{Q}$, $\alpha$ voters rank $b$ above $a$, $\alpha+\beta$ rank $a$ above $b$, and the remaining $\gamma$ voters rank them as tied. Consider the profile $\mathbf{Q}_0$ consisting of just $\beta-2$ of the voters from $\mathbf{Q}$ who rank $a$ above $b$, still ranking $a$ above $b$, and all of the $\gamma$ voters from $\mathbf{Q}$ who ranked them as tied, still ranking them as tied. We claim that $F(\mathbf{Q}_0)=\{a\}$. For if $b\in F(\mathbf{Q}_0)$, then where $\mathbf{R}_0$ is obtained from $\mathbf{Q}_0$ by the $\beta-2$ voters who ranked $a$ above $b$ switching to rank $b$ above $a$, weak positive responsiveness applied $\beta-2$ times implies $F(\mathbf{R}_0)=\{b\}$, while neutrality together with $b\in F(\mathbf{Q}_0)$ implies $a\in F(\mathbf{R}_0)$, a contradiction. Hence $F(\mathbf{Q}_0)=\{a\}$. Now observe that there are exactly $\alpha+2$ voters from $\mathbf{Q}$ who ranked $a$ above $b$ who are not in $\mathbf{Q}_0$. Let $\mathbf{Q}_1$ be obtained from $\mathbf{Q}_0$ by adding all of those $\alpha+2$ voters, where $\alpha+1$ of them still rank $a$ above $b$ but one of them---say, voter~$i_1$---now ranks $b$ above $a$, and adding all $\alpha$ of the original voters who ranked $b$ above $a$, still ranking $b$ above $a$. Then $\mathbf{Q}_1$ is obtained from $\mathbf{Q}_0$ by adding $\alpha+1$ pairs of the rankings $ab$ and $ba$, so by block preservation, $F(\mathbf{Q}_0)=\{a\}$ implies $a\in F(\mathbf{Q}_1)$. Finally, observe that $\mathbf{Q}$ can be obtained from $\mathbf{Q}_1$ by voter $i_1$ moving $a$ from last place to first place, so $a\in F(\mathbf{Q}_1)$ implies $F(\mathbf{Q})=\{a\}$ by weak positive responsiveness, in agreement with majority voting. The relations between $\mathbf{Q}$, $\mathbf{Q}_0$, and $\mathbf{Q}_1$ are shown in Figure \ref{DropAnonFig}.
\end{proof}

As an immediate corollary of Proposition \ref{DropAnon}, we obtain the following.

\begin{proposition} Theorem \ref{MainThm} holds without the assumption of anonymity.
\end{proposition}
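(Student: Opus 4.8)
The plan is to read this off from Proposition~\ref{DropAnon} by revisiting the proof of Theorem~\ref{MainThm} and locating precisely where anonymity is used. The key observation is that in that proof the word ``anonymity'' never appears explicitly: anonymity enters \emph{solely} through the citations of May's Theorem (Theorem~\ref{May}), whose statement lists anonymity among its hypotheses. So it suffices to check that each such citation can be replaced by an appeal to Proposition~\ref{DropAnon}, which requires only neutrality, weak positive responsiveness, homogeneity, and block preservation---all of which $F$ is assumed to satisfy.

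Concretely, I would walk through every invocation of Theorem~\ref{May} in the proof of Theorem~\ref{MainThm}. First, the opening assertion that $F$ ``agrees with majority voting and hence with Minimax on all two-alternative profiles'': if the two-alternative profile $\mathbf{P}$ has a unique majority winner $a$, then Proposition~\ref{DropAnon} gives $F(\mathbf{P})=\{a\}=Minimax(\mathbf{P})$; and if there is a majority tie, then $Minimax(\mathbf{P})=X(\mathbf{P})$, so $F(\mathbf{P})\subseteq Minimax(\mathbf{P})$ holds trivially---which is all that ``$F$ refines Minimax'' demands. Second, in the Condorcet-winner paragraph the conclusions $F(\mathbf{P}''_{-b})=\{c\}$ and $F(\mathbf{P}''_{\mid \{c,b\}})=\{c\}$ are drawn in situations where the relevant margin ($c$ over $a$, respectively $c$ over $b$) is strictly positive---indeed at least $2$ in the first case---so there is a unique majority winner and Proposition~\ref{DropAnon} supplies exactly these conclusions. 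Third, every remaining appeal to Theorem~\ref{May}, inside Cases 2, 3, and 4, is routed through the Condorcet-winner claim of the first paragraph, which has just been re-established without anonymity. Since nothing else in the proof of Theorem~\ref{MainThm} touches anonymity (homogeneity, block preservation, weak positive responsiveness, positive involvement, near immunity to spoilers, and neutrality are used exactly as before), the same argument delivers ``$F$ refines Minimax,'' and the proposition follows.

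The one subtlety to verify---and the main, if modest, obstacle---is that Proposition~\ref{DropAnon} is genuinely weaker than Theorem~\ref{May}: it pins down $F$ only on two-alternative profiles with a \emph{unique} majority winner, whereas May's Theorem also forces $F(\mathbf{P})=\{a,b\}$ in the tie case. So one must confirm that no step of the proof of Theorem~\ref{MainThm} relies on the tie case. This is indeed so: wherever the proof applies Theorem~\ref{May} to a two-alternative subprofile, it has first arranged---by adding odd surpluses of ballots (as in the ``$2m+1$'' and ``$2n+1$'' constructions of Case~2 and the block-plus-flip construction of Case~4) or via its Condorcet-winner hypotheses---for the relevant majority margins to be strictly positive; the tie case is needed only for the base claim that $F$ refines Minimax on two-alternative profiles, where, as noted above, it is immediate. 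Hence the argument goes through verbatim with Proposition~\ref{DropAnon} in place of Theorem~\ref{May}, and Theorem~\ref{MainThm} holds with anonymity omitted.
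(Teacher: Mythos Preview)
Your proposal is correct and follows the paper's own approach: the paper simply states that the proposition is an immediate corollary of Proposition~\ref{DropAnon}, and what you have written is precisely the verification that makes this corollary immediate---namely, that every invocation of Theorem~\ref{May} in the proof of Theorem~\ref{MainThm} either concerns a two-alternative profile with a strict majority winner (so Proposition~\ref{DropAnon} applies) or is subsumed by the trivial inclusion $F(\mathbf{P})\subseteq X(\mathbf{P})=Minimax(\mathbf{P})$ in the tie case. Your explicit check that the margins in the Condorcet-winner paragraph are strictly positive, and that Cases~2--4 rely only on that paragraph rather than on May's Theorem directly, is exactly the due diligence the paper leaves to the reader.
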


\begin{figure}
\begin{center}

$\mathbf{Q}$\quad \begin{tabular}{c|c|c|c|c|c|c|c|c}
$h_1$ & $\dots$ &  $ h_{\alpha}$ & $i_1$  & $\dots$ & $i_{\alpha+\beta}$ & $j_1$ & $\dots $ & $j_\gamma$  \\
\hline 
$b$ & $\dots$ & $b$ & $a$ & $\dots$   & $a$ & $a,b$ & $\dots$ & $a,b$  \\
$a$ & $\dots$ & $a$ & $b$ & $\dots$ & $b$ &  & & 
\end{tabular}\vspace{.2in}

$\mathbf{Q}_0$\quad \begin{tabular}{c|c|c|c|c|c}
 $i_{\alpha+3}$ & $\dots$ & $i_{\alpha+\beta}$ & $j_1$ & $\dots $ & $j_\gamma$  \\
\hline 
 $a$ & $\dots$ & $a$ & $a,b$ & $\dots$ & $a,b$  \\
 $b$ & $\dots$ & $b$ & & &
\end{tabular}\vspace{.2in}

$\mathbf{Q}_1$\quad \begin{tabular}{c|c|c|c|c|c|c|c|c|c|c|c|c}
$h_1$ & $\dots$ &  $ h_{\alpha}$ & $\boldsymbol{i_1}$ & $i_2$ & $\dots$ & $i_{\alpha+2}$ & $i_{\alpha+3}$ & $\dots$ & $i_{\alpha+\beta}$ & $j_1$ & $\dots $ & $j_\gamma$  \\
\hline 
$b$ & $\dots$ & $b$ & $\boldsymbol{b}$ & $a$ & $\dots$ & $a$ & $a$ & $\dots$ & $a$ & $a,b$ & $\dots$ & $a,b$  \\
$a$ & $\dots$ & $a$ & $\boldsymbol{a}$ & $b$ & $\dots$ & $b$ & $b$ & $\dots$ & $b$ & & & \\
 \multicolumn{4}{c}{\upbracefill}&  \multicolumn{3}{c}{\upbracefill} &  \multicolumn{6}{c}{\upbracefill}\\
 \multicolumn{4}{c}{$\scriptstyle \alpha+1$ {\footnotesize voters}}&  \multicolumn{3}{c}{$\scriptstyle \alpha+1$ {\footnotesize voters}} & \multicolumn{6}{c}{{\footnotesize voters from }$\scriptstyle \mathbf{Q}_0$ }
\end{tabular}

\end{center}
\caption{profiles for the proof of Proposition \ref{DropAnon}. Rankings are labeled by voter names above the rankings. Higher alternatives are ranked above lower alternatives, while $a,b$ on the same line represents a tie.}\label{DropAnonFig}
\end{figure}

We now show that each axiom in Theorem~\ref{MainThm} is independent of the other axioms in the following sense.

\begin{proposition}\label{Independence} For each of the axioms $A$ in Theorem \ref{MainThm}, there is a voting method $F_A$ on the domain of profiles with two or three alternatives such that $F_A$ satisfies all of the axioms in Theorem~\ref{MainThm} except for $A$, which it violates; and if $A$ is any of the axioms other than anonymity, then $F_A$ does not refine Minimax.\end{proposition}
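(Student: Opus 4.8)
The plan is to handle the axioms one at a time: for each one, exhibit a voting method on two- and three-alternative profiles that satisfies all the \emph{other} axioms of Theorem~\ref{MainThm} and disagrees with Minimax on some profile. Anonymity need not be treated, since by the corollary to Proposition~\ref{DropAnon} the remaining axioms already force a refinement of Minimax, so no witness for anonymity exists in the relevant sense; thus there are six witnesses to construct, and three of them are standard methods.

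For \emph{weak positive responsiveness}, take the trivial method $F(\mathbf{P})=X(\mathbf{P})$: it satisfies all the other axioms (as noted in Section~\ref{Main}; near immunity to spoilers holds vacuously because on a two-alternative profile it never returns a singleton), but it never makes a previous loser the unique winner, and it disagrees with Minimax on every profile with a unique Minimax winner. For \emph{positive involvement}, take Condorcet-Plurality: it is anonymous and neutral; it satisfies weak positive responsiveness, since after the prescribed promotion the promoted alternative is either the Condorcet winner of the new profile or, failing that, its strict Plurality winner; it is Condorcet consistent and hence near-immune to spoilers on three alternatives; and it satisfies homogeneity and block preservation because the Condorcet winner is determined by the margins and Plurality is a scoring rule. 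It fails positive involvement by Lemma~\ref{PIind}, and disagrees with Minimax on any cyclic three-alternative profile whose Plurality winner is not its Minimax winner. For \emph{near immunity to spoilers}, take the Borda method: as a scoring rule it satisfies anonymity, neutrality, homogeneity, and block preservation; it satisfies weak positive responsiveness (promoting an alternative raises its Borda score by strictly more than any other's) and positive involvement (a new top ranking raises the promoted alternative's Borda score by strictly more than anyone else's); but it can be spoiled. For instance, let $\mathbf{P}$ have three voters ranking $a\succ c\succ b$ and two ranking $c\succ b\succ a$; then $a$ is the Condorcet winner, so $Minimax(\mathbf{P})=\{a\}$, while the Borda winner is $c$; deleting $b$ leaves $a$ the unique winner, $a$ beats $b$ head to head, $b$ is not the Borda winner of $\mathbf{P}$, yet $a\notin\mathrm{Borda}(\mathbf{P})$, so near immunity to spoilers fails.

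For \emph{neutrality}, take a serial dictatorship over alternatives: fix a linear order $\lessdot$ on $\mathcal{X}$ and set $F(\mathbf{P})=\{\min_\lessdot X(\mathbf{P})\}$. Since $F$ depends only on $X(\mathbf{P})$, it is anonymous and satisfies homogeneity and block preservation trivially, and it satisfies weak positive responsiveness and positive involvement because the moves in those axioms leave $X(\mathbf{P})$ unchanged. It satisfies near immunity to spoilers because whenever the hypotheses hold we have $a=\min_\lessdot(X(\mathbf{P})\setminus\{b\})$ and $a\lessdot b$, hence $a=\min_\lessdot X(\mathbf{P})$, i.e.\ $a\in F(\mathbf{P})$. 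It plainly violates neutrality (swap the two $\lessdot$-least alternatives present), and it disagrees with Minimax whenever $\min_\lessdot X(\mathbf{P})$ is not a Minimax winner, e.g.\ when it is the Condorcet loser.

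This leaves \emph{homogeneity} and \emph{block preservation}, for which I would use tailored modifications of Minimax: a method agreeing with Minimax except on a narrowly delimited family of three-alternative profiles, on which it refines or enlarges the Minimax choice. For block preservation the deviation should be triggered by a feature that adding a block of all linear orders \emph{creates} -- for example ``$\mathbf{P}$ already contains one voter of each linear order of $X(\mathbf{P})$'' -- so that a profile with a suitable cyclic margin pattern but not all orders present is treated normally while its block-extension, having the same margins, is treated by the deviation; taking the deviation there to be a strict refinement of the two-element Minimax set then breaks block preservation. For homogeneity the trigger should instead be a feature \emph{destroyed} by doubling (a particular small margin pattern, or the presence of a ballot type of odd multiplicity), so that $2\mathbf{P}$ never lies in the triggered region. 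In both cases I expect the genuine obstacle to be the verification of weak positive responsiveness and positive involvement: these are the axioms most easily broken by a local change, since a single-voter move (the promotion underlying weak positive responsiveness, or the added top vote underlying positive involvement) can carry a non-triggered profile into the triggered region or vice versa, and one must arrange the triggered region and the deviation on it so that no such move maps into the region in a way that drops the alternative the axiom protects. One must likewise check that on the triggered profiles the hypotheses of near immunity to spoilers -- which concern only two-alternative sub-profiles, where the modified method still coincides with Minimax -- are never jointly met. Once these constraints are arranged, the remaining checks (anonymity, neutrality, the other of homogeneity/block preservation, and the existence of a profile of disagreement with Minimax) are routine.
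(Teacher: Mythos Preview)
Your treatment of neutrality, weak positive responsiveness, positive involvement, and near immunity to spoilers matches the paper's (the paper also uses the fixed-order method, the trivial method, weak Condorcet--Plurality, and Borda, respectively). But there are two genuine problems.

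\textbf{Anonymity.} You claim no witness exists because the corollary to Proposition~\ref{DropAnon} shows the remaining axioms already force $F$ to \emph{refine} Minimax. But Proposition~\ref{Independence} does not ask for an $F$ that fails to refine Minimax; it asks only for an $F$ with $F(\mathbf{P})\neq Minimax(\mathbf{P})$ on some profile. A \emph{proper} refinement of Minimax satisfies this. The paper supplies exactly such a witness: fix voters $i,j$, and when $V(\mathbf{P})=\{i,j\}$ with reversed linear orders, let $F(\mathbf{P})$ be $i$'s favorite; otherwise $F(\mathbf{P})=Minimax(\mathbf{P})$. On such a two-voter profile all margins are zero, so $Minimax(\mathbf{P})=X(\mathbf{P})$ while $F(\mathbf{P})$ is a singleton. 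This $F$ violates anonymity but satisfies every other axiom. So the anonymity case must be handled, and your dismissal of it confuses ``refines Minimax'' with ``equals Minimax.''

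\textbf{Homogeneity and block preservation.} Here you give only heuristics, and what you propose does not straightforwardly work. For homogeneity your intuition (trigger on a feature destroyed by doubling) is in the right direction: the paper triggers on the gap between the best and second-best Minimax scores being at most~$1$, and on the trigger \emph{enlarges} rather than refines the Minimax set. For block preservation your suggestion---trigger on the presence of all six linear orders and then strictly refine a two-element Minimax set---runs into the very difficulty you flag: a single added ballot (the move in positive involvement) can push a non-triggered profile into the triggered region, and you give no mechanism ensuring the dropped Minimax winner is never the alternative that the new voter ranked first. The paper's construction is quite different and much more delicate: it triggers on a specific ordinal margin pattern ($0\le m-n<k-m$) \emph{together with} the absence of one particular linear order, and on the trigger it \emph{replaces} the Minimax winner by a non-Minimax alternative rather than refining. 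The paper then devotes a lengthy case analysis to verifying positive involvement and weak positive responsiveness across all the boundary crossings. Your sketch does not get close enough to any workable construction to count as a proof of these two cases.
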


We prove Proposition \ref{Independence} in pieces to highlight which cases require more work.

\begin{lemma}\label{ThreeAxLem} Proposition \ref{Independence} holds for \[A\in \{\mbox{anonymity, neutrality, weak positive responsiveness, near  immunity to spoilers}\}.\]
\end{lemma}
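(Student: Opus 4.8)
The plan is to provide, for each axiom $A$ in the list $\{\text{anonymity}, \text{neutrality}, \text{weak positive responsiveness}, \text{near immunity to spoilers}\}$, an explicit voting method $F$ on the domain of two- and three-alternative profiles that satisfies every axiom of Theorem~\ref{MainThm} except $A$, together with a witnessing profile $\mathbf{P}$ where $F(\mathbf{P}) \neq \mathrm{Minimax}(\mathbf{P})$. Each of these four cases I expect to be relatively easy because the omitted axiom is, in each case, either ``structural'' (anonymity, neutrality) and hence droppable by a cheap asymmetric modification of Minimax, or strong enough (weak positive responsiveness, near immunity to spoilers) that abandoning it lets us fall back on a trivial or near-trivial rule.

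For \emph{anonymity}: take a fixed voter $i_0 \in \mathcal{V}$ and define $F(\mathbf{P})$ to be $\mathrm{Minimax}$ of the profile obtained by giving voter $i_0$'s ballot extra weight (say, counting it twice) if $i_0 \in V(\mathbf{P})$, and $\mathrm{Minimax}(\mathbf{P})$ otherwise. This is non-anonymous; it still satisfies neutrality, weak positive responsiveness, positive involvement, homogeneity, block preservation, and near immunity to spoilers, since all of these are verified exactly as for Minimax once we observe that weighting one fixed voter's ballot behaves monotonically and respects restriction to subsets of alternatives; and a two-alternative profile where $i_0$ is in the minority gives $F(\mathbf{P}) \neq Maj(\mathbf{P}) = \mathrm{Minimax}(\mathbf{P})$. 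For \emph{neutrality}: fix a linear order $\lhd$ on $\mathcal{X}$ and define $F$ to be Minimax except that ties are broken in favor of the $\lhd$-least alternative (on all profiles, or just those with a Minimax tie); this refines Minimax, so by design it satisfies near immunity to spoilers, homogeneity, block preservation, and anonymity, and one checks weak positive responsiveness and positive involvement survive because the tie-break never promotes a non-Minimax alternative; but it violates neutrality (a permutation swapping two tied alternatives changes the output) — wait, this actually \emph{refines} Minimax so the witnessing requirement $F(\mathbf{P}) \neq \mathrm{Minimax}(\mathbf{P})$ is met precisely on a Minimax-tie profile. For \emph{weak positive responsiveness}: the trivial method $F(\mathbf{P}) = X(\mathbf{P})$ satisfies anonymity, neutrality, positive involvement, near immunity to spoilers (vacuously, since its hypothesis $b \notin F(\mathbf{P})$ never holds), homogeneity, and block preservation, as already remarked in the paper, and obviously differs from Minimax whenever Minimax has a unique winner.

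For \emph{near immunity to spoilers}: this is the one case with a little more content, since we must keep weak positive responsiveness, positive involvement, homogeneity, and block preservation while violating near immunity to spoilers and disagreeing with Minimax. The natural candidate is a scoring-style or Borda-flavored rule that still satisfies positive involvement — for instance, restrict attention to three alternatives and let $F$ agree with Minimax when a Condorcet winner exists (forced anyway by weak positive responsiveness plus homogeneity plus block preservation, cf.\ the first paragraph of the proof of Theorem~\ref{MainThm}), but in the absence of a Condorcet winner output something like the set of Borda winners, or all three alternatives, or the ``defensible'' alternatives mentioned in the footnote; one then checks positive involvement directly (Borda satisfies it), and weak positive responsiveness, homogeneity, block preservation hold since these are all score-based; the descending case supplies a profile where a defensible / Borda alternative is a Minimax loser, giving $F(\mathbf{P}) \neq \mathrm{Minimax}(\mathbf{P})$. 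The main obstacle is simply picking this last method so that \emph{all} of weak positive responsiveness, positive involvement, homogeneity, and block preservation genuinely hold simultaneously while near immunity fails — I would handle it by choosing the rule to be anonymous, neutral, and margin-based (so homogeneity and block preservation are automatic), to coincide with Minimax on Condorcet-winner profiles and on all two-alternative profiles (so weak positive responsiveness is inherited where it bites), and to select a strictly larger set than Minimax on exactly one orbit of cyclic three-alternative margin graphs, then verify positive involvement and the near-immunity violation on that orbit by hand.
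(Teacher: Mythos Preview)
Your treatment of weak positive responsiveness (the trivial rule $F(\mathbf{P})=X(\mathbf{P})$) matches the paper exactly. For the other three cases, however, your constructions are more elaborate than needed and in one case actually break.

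\textbf{Anonymity.} Your double-weighting construction fails (inclusion) homogeneity. Take $X=\{a,b\}$, $V(\mathbf{P})=\{i_0,v_1,v_2\}$ with $i_0$ ranking $a$ above $b$ and $v_1,v_2$ ranking $b$ above $a$. Counting $i_0$'s ballot twice gives an effective tie, so $F(\mathbf{P})=\{a,b\}$. But in $2\mathbf{P}$ the copy of $i_0$ is a \emph{different} voter and gets only single weight, so the effective count is $3$ for $a$ versus $4$ for $b$, giving $F(2\mathbf{P})=\{b\}$; hence $a\in F(\mathbf{P})\setminus F(2\mathbf{P})$. The paper avoids this by a far more localized deviation: fix voters $i,j$ and set $F(\mathbf{P})$ to be $i$'s favorite only when $V(\mathbf{P})=\{i,j\}$ and they submit reversed linear orders, and $F=\mathrm{Minimax}$ otherwise. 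Since $|V(2\mathbf{P})|\geq 4$, the deviation never occurs in $2\mathbf{P}$, and one checks the remaining axioms trivially.

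\textbf{Neutrality.} Your tie-broken Minimax does satisfy the literal statement of Proposition~\ref{Independence}, but it still \emph{refines} Minimax, so it does not witness that neutrality is needed for the conclusion of Theorem~\ref{MainThm}. The paper instead takes the constant rule ``always output the $\leqslant$-least element of $X(\mathbf{P})$,'' which is not a refinement of Minimax and for which all six remaining axioms are one-line checks (for immunity to spoilers: if $a$ is $\leqslant$-least in both $X(\mathbf{P})\setminus\{b\}$ and $\{a,b\}$, it is $\leqslant$-least in $X(\mathbf{P})$).

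\textbf{Near immunity to spoilers.} You circle around the right idea but never commit to a specific method, and your final paragraph proposes designing a bespoke rule and verifying it ``by hand.'' The paper simply uses Borda: it is anonymous, neutral, margin-based (hence homogeneous and block-preserving), satisfies positive involvement and weak positive responsiveness by standard score arguments, and violates near immunity to spoilers already for three alternatives (with an explicit profile cited). No bespoke construction is needed.
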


\begin{proof} Where $A$ is anonymity, fix designated voters $i,j\in\mathcal{V}$ and define a voting method $F$ as follows: if $V(\mathbf{P})=\{i,j\}$ and $i$ and $j$ have reversed linear orders of the alternatives in $\mathbf{P}$, then $F(\mathbf{P})$ contains only $i$'s favorite alternative; otherwise $F(\mathbf{P})=Minimax(\mathbf{P})$. Clearly $F$ violates anonymity, but it is easy to check that it satisfies all the other axioms.

Where $A$ is neutrality, fixing a well-order $\leqslant$ on $\mathcal{X}$, let $F(\mathbf{P})$ contain only the alternative from $X(\mathbf{P})$ that comes first according to $\leqslant$. Clearly $F$ violates neutrality, but it is easy to check that $F$ satisfies all the other axioms (for immunity to spoilers, if $a$ comes first in $\{a,c\}$ and $\{a,b\}$, then  $a$ comes first in $\{a,b,c\}$).

Where $A$ is weak positive responsiveness, let $F$ be the voting method such that $F(\mathbf{P})=X(\mathbf{P})$. Clearly $F$ violates weak positive responsiveness, but it is easy to check that $F$ satisfies all the other axioms.

Where $A$ is near immunity to spoilers, the Borda voting method clearly satisfies all the axioms except for near immunity to spoilers, which it violates in three-alternative profiles (see \citealt[Example 4.15]{HP2021}); it is also easy to construct three-alternative profiles with $Borda(\mathbf{P})\not\subseteq Minimax(\mathbf{P})$.
\end{proof}

\begin{lemma}\label{PIind} Proposition \ref{Independence} holds for $A=\mbox{positive involvement}$.
\end{lemma}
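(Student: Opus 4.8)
My plan is to take $F$ to be the \emph{Condorcet--Plurality} rule on the domain of two- and three-alternative profiles: if $\mathbf{P}$ has a Condorcet winner, set $F(\mathbf{P})$ equal to the singleton containing it; otherwise set $F(\mathbf{P})$ equal to the set of alternatives with the most first-place votes (the convention for counting first-place votes on ballots with a tie at the top will be immaterial below). I would then verify that $F$ satisfies every axiom in Theorem~\ref{MainThm} except positive involvement, and that $F$ differs from Minimax on some profile. Anonymity and neutrality are immediate. Homogeneity holds because passing from $\mathbf{P}$ to $2\mathbf{P}$ doubles every margin and every first-place count, changing neither the Condorcet winner nor the Plurality argmax, so $F(2\mathbf{P})=F(\mathbf{P})$. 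Block preservation holds because adding one copy of each of the six linear orders of $X(\mathbf{P})$ contributes $0$ to every margin and the same number of first-place votes to each alternative, so again $F(\mathbf{P}+\mathbf{P}')=F(\mathbf{P})$. Near immunity to spoilers follows from the facts that $F$ is Condorcet consistent and agrees with majority rule on two alternatives: if $X(\mathbf{P})=\{a,b,c\}$, $F(\mathbf{P}_{-b})=\{a\}$, and $F(\mathbf{P}_{\mid\{a,b\}})=\{a\}$, then $a$ strictly beats both $b$ and $c$ in $\mathbf{P}$, so $a$ is the Condorcet winner of $\mathbf{P}$ and $a\in F(\mathbf{P})$, regardless of whether $b$ wins (and for two-alternative $\mathbf{P}$ the hypothesis is vacuous, since $\mathbf{P}_{-b}$ is not in the domain).

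The one axiom demanding care is weak positive responsiveness. Suppose $a\in F(\mathbf{P})$ and one voter who ranked $a$ uniquely last switches to ranking $a$ uniquely first, yielding $\mathbf{P}'$; this move increases each of $a$'s margins (by $2$) and increases $a$'s first-place count (by $1$), while leaving every margin not involving $a$ unchanged and not increasing any other alternative's first-place count. If $\mathbf{P}'$ has a Condorcet winner it must be $a$: were some $x\neq a$ to beat everyone in $\mathbf{P}'$, then, its margins against the alternatives $\neq a$ being the same in $\mathbf{P}$ and its margin against $a$ being at least as large in $\mathbf{P}$, $x$ would also be the Condorcet winner of $\mathbf{P}$, forcing $F(\mathbf{P})=\{x\}\not\ni a$; so $F(\mathbf{P}')=\{a\}$. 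If $\mathbf{P}'$ has no Condorcet winner, then $\mathbf{P}$ has none either (any Condorcet winner of $\mathbf{P}$ would have to be $a$ since $a\in F(\mathbf{P})$, and $a$'s margins only improve), so $a$ has maximal first-place count in $\mathbf{P}$; in $\mathbf{P}'$ its count then strictly exceeds that of every other alternative, so again $F(\mathbf{P}')=\{a\}$.

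Finally, to witness the failure of positive involvement I would exhibit a three-alternative profile $\mathbf{P}$ of linear ballots consisting of one ballot $a\succ b\succ c$, three ballots $b\succ c\succ a$, two ballots $c\succ b\succ a$, and two ballots $a\succ c\succ b$: here $a$ and $b$ each have three first-place votes, while in the margin graph $b$ and $c$ each beat $a$ by $2$ with a zero margin between $b$ and $c$, so there is no Condorcet winner and $a\in F(\mathbf{P})=\{a,b\}$; but adding one further voter with ballot $a\succ b\succ c$ makes $b$ the Condorcet winner, so $a\notin F(\mathbf{P}')$. The same profile settles the second requirement, since $Minimax(\mathbf{P})=\{b,c\}\neq\{a,b\}=F(\mathbf{P})$; alternatively, $F\neq Minimax$ is automatic from Fact~\ref{MinimaxPI}, as Minimax satisfies positive involvement and $F$ does not. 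I expect the main work to be the case analysis for weak positive responsiveness --- in particular ruling out the emergence of a ``new'' Condorcet winner $\neq a$ when $a$ is promoted --- together with, to a lesser degree, pinning down a profile that simultaneously hosts a Plurality-electable majority cycle and lies one $a$-first voter away from having a Condorcet winner.
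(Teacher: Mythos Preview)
Your proposal is correct and follows essentially the same approach as the paper: both use a Condorcet--Plurality hybrid to witness the independence of positive involvement. The only difference is that the paper defines its hybrid via \emph{weak} Condorcet winners (if some alternative has non-negative margin against every other, return the set of all such alternatives; otherwise return the plurality argmax), whereas you use the \emph{strict} Condorcet winner. This forces you to produce a different violating profile: in your eight-voter example $b$ and $c$ are both weak Condorcet winners, so the paper's method would already pick $\{b,c\}=Minimax(\mathbf{P})$ there; conversely, in the paper's thirteen-voter example the added $cba$ voter yields only a weak Condorcet winner, so your strict version would not flip the outcome. Both variants verify the remaining axioms by the same reasoning, and your case analysis for weak positive responsiveness (in particular, ruling out a new Condorcet winner $x\neq a$ by observing that $x$'s margins against non-$a$ alternatives are unchanged while its margin against $a$ was even larger in $\mathbf{P}$) is sound.
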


\begin{proof} Given a profile $\mathbf{P}$, an alternative $a$ is a \textit{weak Condorcet winner} if $a$ has a non-negative margin over each of the other alternatives in $\mathbf{P}$. The \textit{plurality score} of an alternative $a$ in $\mathbf{P}$ is the number of voters who rank $a$ uniquely first. Now consider the (weak) Condorcet-Plurality voting method $\mbox{\textit{CP}}$: if there is a weak Condorcet winner in $\mathbf{P}$, then $\mbox{\textit{CP}}(\mathbf{P})$ is the set of weak Condorcet winners; otherwise $\mbox{\textit{CP}}(\mathbf{P})$ is the set of alternatives with maximal plurality score.\footnote{Support-based Minimax as in Remark \ref{SupportVMargin} can also be used to prove the independence of positive involvement, but we like the example of Condorcet-Plurality as a method further away from margin-based Minimax.}  To see that $CP$ violates positive involvement, consider the following profiles $\mathbf{P}$ and $\mathbf{P}'$ with their respective margin graphs:
\begin{center}
\begin{minipage}{2in}
$\mathbf{P}$\quad\begin{tabular}{cccc}
$4$ & $4$ & $3$ & $2$ \\
\hline
$a$ & $b$ & $c$ & $c$  \\
$b$ & $c$ & $a$ & $b$ \\
$c$ & $a$ & $b$ & $a$ 
\end{tabular}\vspace{.1in}
\begin{tikzpicture}

\node[circle,draw, minimum width=0.25in] at (0,0) (a) {$a$}; 
\node[circle,draw,minimum width=0.25in] at (3,0) (c) {$c$}; 
\node[circle,draw,minimum width=0.25in] at (1.5,1.5) (b) {$b$}; 

\path[->,draw,thick] (b) to node[fill=white] {$3$} (c);
\path[->,draw,thick] (c) to node[fill=white] {$5$} (a);
\path[->,draw,thick] (a) to node[fill=white] {$1$} (b);

\end{tikzpicture} 
\end{minipage}\begin{minipage}{2in}
$\mathbf{P}'$\quad\begin{tabular}{cccc}
$4$ & $4$ & $3$ & $\boldsymbol{3}$ \\
\hline
$a$ & $b$ & $c$ & $\boldsymbol{c}$  \\
$b$ & $c$ & $a$ & $\boldsymbol{b}$ \\
$c$ & $a$ & $b$ & $\boldsymbol{a}$ 
\end{tabular}\vspace{.1in}
\begin{tikzpicture}

\node[circle,draw, minimum width=0.25in] at (0,0) (a) {$a$}; 
\node[circle,draw,minimum width=0.25in] at (3,0) (c) {$c$}; 
\node[circle,draw,minimum width=0.25in] at (1.5,1.5) (b) {$b$}; 

\path[->,draw,thick] (b) to node[fill=white] {$2$} (c);
\path[->,draw,thick] (c) to node[fill=white] {$6$} (a);

\end{tikzpicture}
\end{minipage}
\end{center}
In $\mathbf{P}$, there is no Condorcet winner, so $\mbox{\textit{CP}}(\mathbf{P})=\{c\}$ since $c$ has the maximum plurality score of $5$. Then $\mathbf{P}'$ is obtained from $\mathbf{P}$ by adding one voter with the ranking $cba$, but in $\mathbf{P'}$, $b$ is the unique weak Condorcet winner, so $\mbox{\textit{CP}}(\mathbf{P}')=\{b\}$, violating positive involvement.

Clearly $CP$ satisfies anonymity, neutrality, homogeneity, and block preservation.  For weak positive responsiveness, suppose $a\in \mbox{\textit{CP}}(\mathbf{P})$ and $\mathbf{P}'$ is obtained from $\mathbf{P}$ by one voter in $V(\mathbf{P})$ who ranked $a$ uniquely last in $\mathbf{P}$ switching to ranking $a$ uniquely first in $\mathbf{P}'$, otherwise keeping their ranking the same. 

Case 1: there is a weak Condorcet winner in $\mathbf{P}$. Then $a\in \mbox{\textit{CP}}(\mathbf{P})$ implies that $a$ is a weak Condorcet winner in $\mathbf{P}$. Then since $a$'s margins against all other alternatives increase from $\mathbf{P}$ to $\mathbf{P}'$, $a$ is the Condorcet winner and hence the only weak Condorcet winner in $\mathbf{P}'$, so $\mbox{\textit{CP}}(\mathbf{P}')=\{a\}$. 

Case 2: there is no weak Condorcet winner in $\mathbf{P}$. Then $a\in \mbox{\textit{CP}}(\mathbf{P})$ implies that $a$ is among the alternatives with maximal plurality score in $\mathbf{P}$. Then clearly $a$ has the unique maximum plurality score in $\mathbf{P}'$, and if moving from $\mathbf{P}$ to $\mathbf{P}'$ creates a weak Condorcet winner, it must be $a$ alone, so in either case we have $\mbox{\textit{CP}}(\mathbf{P}')=\{a\}$. 

To see that $CP$ satisfies immunity to spoilers in profiles with two or three alternatives, suppose $a\in \mbox{\textit{CP}}(\mathbf{P}_{-b})$ and $\mbox{\textit{CP}}(\mathbf{P}_{\mid \{a,b\}})=\{a\}$. Since $\mathbf{P}$ has at most three alternatives, $a\in \mbox{\textit{CP}}(\mathbf{P}_{-b})$ implies that $a$ is a weak Condorcet winner in $\mathbf{P}_{-b}$, and $\mbox{\textit{CP}}(\mathbf{P}_{\mid \{a,b\}})=\{a\}$ implies $a$ is the Condorcet winner in $\mathbf{P}_{\mid \{a,b\}}$. It follows that $a$ is a weak Condorcet winner in $\mathbf{P}$, so $a\in \mbox{\textit{CP}}(\mathbf{P})$. \end{proof}

In the following, the voting methods showing the independence of homogeneity and block preservation are highly contrived. We suspect there are no natural voting methods besides refinements of Minimax that satisfy May's axioms, positive involvement, and near immunity to spoilers.

\begin{lemma}  Proposition \ref{Independence} holds for $A=\mbox{homogeneity}$.
\end{lemma}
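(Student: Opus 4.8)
The plan is to exhibit a method $F$ that agrees with Minimax everywhere except on the orbit $O$ (under renaming of voters and permuting of alternatives) of a single, carefully chosen profile $\mathbf{R}$, where $F$ breaks a Minimax tie. Take $\mathbf{R}$ to be a profile of linear ballots with an \emph{even} number of voters of each ballot type, omitting at least one of the six linear orders, and whose margin graph is of the \emph{Condorcet loser} form, with one alternative $a_0$ losing to $b_0$ by a small even margin and to $c_0$ by a much larger even margin, and with $\mathrm{Margin}_{\mathbf{R}}(b_0,c_0)=0$. Concretely one may take $2$ voters with ballot $c_0b_0a_0$, $6$ with $b_0c_0a_0$, and $4$ with $c_0a_0b_0$, so that $\mathrm{Margin}_\mathbf{R}(b_0,a_0)=4$, $\mathrm{Margin}_\mathbf{R}(c_0,a_0)=12$, $\mathrm{Margin}_\mathbf{R}(b_0,c_0)=0$, and $Minimax(\mathbf{R})=\{b_0,c_0\}$. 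Since this margin graph is fixed by no nontrivial permutation of $\{a_0,b_0,c_0\}$, the profile $\mathbf{R}$ has trivial automorphism group, so we may consistently set $F(\tau[\mathbf{R}])=\{\tau(b_0)\}$ for each $\tau$ (and each anonymity-variant), and $F(\mathbf{P})=Minimax(\mathbf{P})$ for $\mathbf{P}\notin O$. By construction $F$ is anonymous and neutral, $F$ refines Minimax, and $F(\mathbf{R})=\{b_0\}\neq\{b_0,c_0\}=Minimax(\mathbf{R})$. Homogeneity fails at $\mathbf{P}_0:=\mathbf{R}/2\notin O$ (the profile with $1$, $3$, $2$ voters of the three ballot types), since $F(\mathbf{P}_0)=Minimax(\mathbf{P}_0)=\{b_0,c_0\}\not\subseteq\{b_0\}=F(2\mathbf{P}_0)$.

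It remains to verify weak positive responsiveness, positive involvement, near immunity to spoilers, and block preservation. For any instance of one of these axioms none of whose profiles lies in $O$, the claim reduces to the corresponding property of Minimax (Facts~\ref{MinimaxMay}, \ref{MinimaxPI}, \ref{MinimaxIS}, and \ref{MinimaxHBP}), once one notes that the relevant restricted, deleted-voter, block-extended, or move-transformed profiles are again not in $O$ (e.g.\ a weak-positive-responsiveness move applied to a Minimax winner of $\mathbf{P}\notin O$ yields a profile with a unique Minimax winner, hence outside $O$, whose members have two Minimax winners). By neutrality of $F$ it then suffices to treat instances in which the $O$-profile is $\mathbf{R}$ itself. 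When a move \emph{starts} from $\mathbf{R}$ --- the weak-positive-responsiveness move taking $b_0$ (the unique element of $F(\mathbf{R})$) from uniquely last to uniquely first, or the positive-involvement move adding a voter ranking $b_0$ uniquely first --- the result leaves $O$ (it acquires a Condorcet winner, or has more voters), so $F$ there is Minimax, and the conclusion follows from $b_0\in Minimax(\mathbf{R})$ and Minimax's own weak positive responsiveness / positive involvement. Block preservation out of $\mathbf{R}$ is immediate since adding a block preserves all margins, and block preservation \emph{into} $\mathbf{R}$ (and into any $\tau[\mathbf{R}]$) is vacuous because $\mathbf{R}$ omits a linear order; near immunity to spoilers at $\mathbf{R}$ is vacuous because $\mathbf{R}$ has no Condorcet winner, so its hypotheses --- which jointly force some alternative to be the Condorcet winner of $\mathbf{R}$ --- cannot be met; and homogeneity at $\mathbf{P}\in O$ holds since $2\mathbf{P}\notin O$ and $F(2\mathbf{P})=Minimax(\mathbf{P})\supseteq F(\mathbf{P})$.

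The crux is the case of a weak-positive-responsiveness or positive-involvement move \emph{into} $\mathbf{R}$: one must check that no such move forces $a_0$ or $c_0$ into $F(\mathbf{R})$ (this also settles the subcase above of a positive-involvement move that passes from a profile outside $O$ into $O$). For each voter $v$ of $\mathbf{R}$ that ranks $a_0$ (resp.\ $c_0$) uniquely first --- in the example no voter ranks $a_0$ first, and the voters ranking $c_0$ first come in two ballot types --- one computes $Minimax(\mathbf{R}-v)$ (for positive involvement) and the Minimax winners of the profile obtained from $\mathbf{R}$ by demoting that alternative to uniquely last in $v$'s ballot (for weak positive responsiveness), and checks that $a_0$ (resp.\ $c_0$) is not among them; the margins of $\mathbf{R}$ (the loss of $a_0$ to $c_0$ large, that to $b_0$ small, and $b_0\sim c_0$) are chosen precisely so these computations turn out this way, so that $b_0$ is the only alternative such a move could force into $F(\mathbf{R})$, consistently with our choice $F(\mathbf{R})=\{b_0\}$. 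This finite verification is the main obstacle, but it is routine once $\mathbf{R}$ is fixed.
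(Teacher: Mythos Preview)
Your construction is correct and establishes Proposition~\ref{Independence} as literally stated, but it takes a quite different route from the paper and is in one respect weaker.

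The paper's witness modifies Minimax on an \emph{infinite} family of profiles---those satisfying conditions (I)--(III)---by \emph{enlarging} the winner set to include the alternative with the second-best Minimax score whenever the gap is at most~$1$ and that alternative beats the Minimax winner head-to-head. Doubling then separates the scores by at least~$2$, so the extra winner disappears in $2\mathbf{P}$, violating homogeneity. The verification of weak positive responsiveness and positive involvement is a short case analysis on score movements. Your witness instead \emph{shrinks} the winner set on the anonymity--neutrality orbit of a single concrete twelve-voter profile $\mathbf{R}$, relying on the fact that $\mathbf{R}$ has trivial automorphism group so the tiebreak is well-defined, and that $\mathbf{R}$ omits some linear orders and has two Minimax winners, making the block-preservation-into-$O$ and weak-positive-responsiveness-into-$O$ cases vacuous. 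The remaining checks are a finite computation on the handful of one-voter perturbations of $\mathbf{R}$, which does go through with your specific margins. Your approach is arguably more elementary: it avoids the case analysis over the (I)--(III) family and reduces everything to a mechanical verification on a fixed small profile.

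One point worth flagging: because your $F$ \emph{refines} Minimax everywhere, it still satisfies the conclusion of Theorem~\ref{MainThm}. So while your $F$ witnesses Proposition~\ref{Independence} exactly as written ($F(\mathbf{R})\neq Minimax(\mathbf{R})$), it does not show that homogeneity is needed for the \emph{conclusion} of Theorem~\ref{MainThm}---only that it is needed if one wants $F$ to coincide with Minimax exactly. The paper's $F$, by contrast, includes a non-Minimax-winner and hence fails to refine Minimax, giving the stronger form of independence. If the intent of the independence section is to show that each axiom is doing real work in forcing $F\subseteq Minimax$, the paper's construction is the more informative one; yours suffices for the proposition as stated.
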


\begin{proof} Consider the voting method $F$ defined on profiles $\mathbf{P}$ with two or three alternatives as follows: $F(\mathbf{P})=Minimax(\mathbf{P})$ except if (I) each alternative has a unique Minimax score, (II) the second best Minimax score differs from the best Minimax score by at most 1, and (III) the alternative with the second best Minimax score beats the Minimax winner head-to-head, in which case $F(\mathbf{P})$ contains both the Minimax winner and the alternative that beats the Minimax winner head-to-head. For example, in the following profile $\mathbf{P}$, we have $F(\mathbf{P})=\{a,c\}$:
\begin{center}
\begin{minipage}{1.75in}
\begin{tabular}{cccc}
 6 & 1  & 4 & 3   \\ \hline 
 $a$ & $b$ & $b$ & $c$   \\
  $b,c$ & $a,c$  & $c$ & $a$  \\
    &  & $a$   & $b$
\end{tabular}\end{minipage}\begin{minipage}{1.75in}\begin{tikzpicture}

\node[circle,draw, minimum width=0.25in] at (0,0) (a) {$a$}; 
\node[circle,draw,minimum width=0.25in] at (3,0) (c) {$c$}; 
\node[circle,draw,minimum width=0.25in] at (1.5,1.5) (b) {$b$}; 

\path[->,draw,thick] (b) to node[fill=white] {$2$} (c);
\path[->,draw,thick] (c) to node[fill=white] {$1$} (a);
\path[->,draw,thick] (a) to node[fill=white] {$4$} (b);

\end{tikzpicture}\end{minipage}
\end{center}
Observe that in $2\mathbf{P}$, the difference between the best and second best Minimax scores is 2, so $F(2\mathbf{P})=\{a\}$, violating homogeneity. Yet clearly $F$ satisfies anonymity, neutrality, block preservation (since adding a block of all linear orders does not change margins), and immunity to spoilers (since $F$ agrees with Minimax in two-alternative profiles and selects all of the Minimax winners in three-alternative profiles). It remains to show that $F$ satisfies weak positive responsiveness and positive involvement.

For positive involvement, for an arbitrary profile $\mathbf{P}$, suppose  $x\in F(\mathbf{P})$ and that $\mathbf{P}'$ is obtained from $\mathbf{P}$ by adding one voter who ranks $x$ first. If  $x\in Minimax(\mathbf{P})$, then $x \in Minimax (\mathbf{P}')$ by positive involvement for Minimax, so $x\in F(\mathbf{P}')$ by the definition of $F$. On the other hand, if conditions (I)-(III) hold and $x$ is the alternative with the second best Minimax score in $\mathbf{P}$, then since $x$ beats the Minimax winner $m$ head-to-head, $m$'s Minimax score is given by $x$'s margin over $m$ (since otherwise $m$ would have two losses, which is impossible for a Minimax winner in a three-alternative profile), so  in $\mathbf{P}'$ the Minimax score of $m$ increases by 1, while the Minimax score of $x$ decreases by $1$, so $x\in Minimax(\mathbf{P}')\subseteq F(\mathbf{P}')$.

For weak positive responsiveness, suppose $x\in F(\mathbf{P})$, and $\mathbf{P}'$ is obtained from $\mathbf{P}$ by  one voter moving $x$ from last place to first place. Since $x\in F(\mathbf{P})$, one of the following cases obtains.

Case 1: $x$'s Minimax score is minimal in $\mathbf{P}$. Then since $x$'s Minimax score decreases by 2 from $\mathbf{P}$ to $\mathbf{P}'$, while the scores of the other alternatives do not decrease, condition (II) does not obtain in $\mathbf{P}'$, so $F(\mathbf{P}')=Minimax(\mathbf{P}')=\{x\}$.

Case 2: $x$'s Minimax score is 1 greater than the minimal score in $\mathbf{P}$, and $x$ beats the Minimax winner head-to-head.  Let $y$ be the Minimax winner and $z$ the alternative that beats $x$ head-to-head in $\mathbf{P}$. By condition (I), the difference between the Minimax scores of $y$ and $z$ in $\mathbf{P}$ is at least 2. Then since $y$'s Minimax score increases by only 2 from $\mathbf{P}$ to $\mathbf{P}'$, while $z$'s does not decrease, $y$'s Minimax score is no greater than $z$'s Minimax score in $\mathbf{P}'$. Hence either $y$ and $z$ have the same Minimax score in $\mathbf{P}'$, in which case condition (I) does not hold for $\mathbf{P}'$, or $y$'s Minimax score is smaller than $z$'s, in which case condition (III) does not obtain in $\mathbf{P}'$, since $y$ does not beat $x$ head-to-head in $\mathbf{P}'$.  In either case, $F(\mathbf{P}')=Minimax(\mathbf{P}')=\{x\}$.\end{proof}

\begin{lemma}\label{BPLem} Proposition \ref{Independence} holds for $A=\mbox{block preservation}$.
\end{lemma}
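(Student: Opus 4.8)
The plan is to distort Minimax's tie-breaking only on the three-alternative profiles the proof of Theorem \ref{MainThm} calls the ``ascending case with $n=m$'': those whose margin graph is a genuine $3$-cycle in which two alternatives --- $b$, the one that the largest-loss alternative $a$ beats, and $c$, the one that beats $a$ --- are tied for a loss strictly below $a$'s loss. Call these the \emph{critical} profiles and let $g(\mathbf{P})$ be the gap between $a$'s loss and the common loss of $b$ and $c$. Put $F=Minimax$ on every non-critical profile and on every critical profile with $g(\mathbf{P})\geq 3$; on a critical profile with $g(\mathbf{P})\in\{1,2\}$ put $F(\mathbf{P})=\{c\}$ if at least one voter submits the order $c>b>a$ and $F(\mathbf{P})=\{b\}$ otherwise. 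Since $a,b,c$ are roles read off the margin graph, $F$ is anonymous and neutral, and it is a refinement of Minimax. On the five-voter profile with one voter ranking $a>b>c$, two ranking $b>c>a$, and two ranking $c>a>b$, the margins form a $3$-cycle of weights $1,1,3$ (so $g=2$ and the roles $a,b,c$ are the literal names), there is no $c>b>a$ voter, so $F=\{b\}$ while $Minimax=\{b,c\}$.

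Block preservation fails on that profile: adding one copy of each of the six linear orders leaves every margin unchanged (so the profile stays critical with $g=2$ and fixed roles) but now supplies a voter ranking $c>b>a$, so $F$ jumps to $\{c\}$ and the former winner $b$ is lost. Homogeneity is where the gap-$1$ profiles earn their keep: doubling a critical profile doubles its gap, so the double of a gap-$2$ profile has gap $4$ and there $F=Minimax\supseteq\{b\},\{c\}$, while the double of a gap-$1$ profile has gap $2$ --- and since doubling also doubles the number of $c>b>a$ voters, the defining dichotomy is preserved, so $F(\mathbf{P})=F(2\mathbf{P})$ on gap-$1$ profiles; on every other profile doubling is Minimax-invariant and preserves criticality and gap class. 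Hence $F(\mathbf{P})\subseteq F(2\mathbf{P})$ throughout.

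The work is again in weak positive responsiveness, positive involvement, and near immunity to spoilers. Near immunity: its hypothesis forces the designated winner $a$ of the axiom to beat both other alternatives, so $a$ is a Condorcet winner, the profile is non-critical, and $F=Minimax=\{a\}$ (for two-alternative $\mathbf{P}$ the axiom is vacuous). Weak positive responsiveness holds uniformly: moving $x\in F(\mathbf{P})$, necessarily a Minimax winner, from uniquely last to uniquely first for one voter lowers $x$'s loss by exactly $2$ and lowers no other alternative's loss, so $x$ becomes the unique Minimax winner of a (thus non-critical) profile and $F$ returns $\{x\}$. Positive involvement is the delicate point. The key lemma is that if $x\in F(\mathbf{P})$ and adding a voter ranking $x$ uniquely first produces a critical profile $\mathbf{P}'$ in which $x$ is still a Minimax winner, then, unless $\mathbf{P}$ is a ``symmetric''-case profile, a short margin computation across the possible ballot types shows that $x$'s loss in $\mathbf{P}$ would strictly exceed the minimum, contradicting $x\in Minimax(\mathbf{P})$; and when $\mathbf{P}$ \emph{is} symmetric one checks that the added ballot must be exactly $c'>b'>a'$ in the roles of $\mathbf{P}'$ and that $g(\mathbf{P}')=2$, so $F(\mathbf{P}')=\{c'\}\ni x$ by the defining rule. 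The remaining cases are routine: $\mathbf{P}$ non-critical with $\mathbf{P}'$ non-critical is Fact \ref{MinimaxPI}; $\mathbf{P}$ critical with $F(\mathbf{P})=\{c\}$ and a voter ranking $c$ first added either makes $c$ the Condorcet winner, or yields a profile of gap $\geq 3$, or a degenerate profile without a $3$-cycle (in the last two cases $F=Minimax\ni c$), and never disturbs the $c>b>a$-voter count; $\mathbf{P}$ critical with $F(\mathbf{P})=\{b\}$ has adding a voter ranking $b$ first make $b$ the unique Minimax winner, while adding one ranking $a$ or $c$ first makes the hypothesis vacuous.

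The main obstacle is precisely this positive-involvement bookkeeping: one must rule out, by explicit margin arithmetic on every ballot type, each way a critical profile could arise from adding a top-ranked voter to a profile in which that voter's favorite already wins, and recognize that the only surviving source is the symmetric case --- which the ``$c>b>a$ voter'' clause is tailor-made to accommodate. A secondary subtlety is that homogeneity forces one to modify the gap-$1$ critical profiles in lock-step with the gap-$2$ ones, even though only the latter supply the block-preservation counterexample; and one must check that the defining clause is phrased so that adjoining the all-orders block always introduces a $c>b>a$ voter (flipping the clause) whereas doubling never does.
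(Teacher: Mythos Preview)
Your construction is correct and takes a genuinely different route from the paper's. The paper's $F$ deviates from Minimax by selecting a Minimax \emph{loser} in certain cyclic profiles (those with $0\le m-n<k-m$ and no ballot of a designated linear type), so its $F$ does not refine Minimax at all. Your $F$, by contrast, is a proper \emph{refinement} of Minimax: it only breaks the $\{b,c\}$ tie in the ascending-case-with-$n=m$ profiles of small gap. This buys you essentially free proofs of weak positive responsiveness and near immunity to spoilers---both follow from the corresponding facts for Minimax together with the observation that the resulting profile has a unique Minimax winner (hence is non-critical)---whereas the paper must run a long case analysis for each. Both constructions exploit the same trick for violating block preservation (a ``no voter submits a particular linear order'' clause that the all-orders block necessarily destroys), and both need the gap-$1$/gap-$2$ coupling (your version) or the $m-n<k-m$ slack (paper's version) precisely to survive homogeneity. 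Proposition~\ref{Independence} as stated only asks for $F(\mathbf{P})\neq Minimax(\mathbf{P})$ on some profile, which your refinement delivers; the paper's non-refining example is in that sense stronger but not required.

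One small inaccuracy in your positive-involvement bookkeeping: in the case ``$\mathbf{P}$ critical with $F(\mathbf{P})=\{c\}$ and a voter ranking $c$ first added,'' you claim the result ``either makes $c$ the Condorcet winner, or yields a profile of gap $\geq 3$, or a degenerate profile without a $3$-cycle.'' But adding the ballot $c>a>b$ (or $c>(a\sim b)$) when $n\geq 2$ produces a genuine $3$-cycle with three pairwise distinct losses $k+1>n+1>n-1$ (resp.\ $k+1>n>n-1$): here $c$ is the unique Minimax winner but \emph{not} a Condorcet winner (since $b$ still beats $c$), and the profile is neither critical nor degenerate. The conclusion $c\in F(\mathbf{P}')$ still holds, since $F=Minimax$ on non-critical $3$-cycles; you just need to add ``or makes $c$ the unique Minimax winner of a non-critical $3$-cycle'' to your enumeration.
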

\begin{proof} Consider the voting method $F$ defined on profiles $\mathbf{P}$ with two or three alternatives as follows: $F(\mathbf{P})=Minimax(\mathbf{P})$ except if (i) the margin graph of $\mathbf{P}$ is a cycle of the form
\begin{center}
\begin{minipage}{1.5in}\begin{tikzpicture}

\node[circle,draw, minimum width=0.25in] at (0,0) (a) {$x$}; 
\node[circle,draw,minimum width=0.25in] at (3,0) (c) {$z$}; 
\node[circle,draw,minimum width=0.25in] at (1.5,1.5) (b) {$y$}; 

\path[->,draw,thick] (b) to node[fill=white] {$k$} (c);
\path[->,draw,thick] (c) to node[fill=white] {$m$} (a);
\path[->,draw,thick] (a) to node[fill=white] {$n$} (b);

\end{tikzpicture}\end{minipage}\begin{minipage}{1.5in}where $0\leq m-n < k-m$\end{minipage}
\end{center}
and (ii) no voter submits the ranking $yxz$, in which case $F(\mathbf{P})=\{x\}$. Note that we are using `$x$', `$y$', and `$z$' here as variables, so if there is any assignment of the alternatives in $\mathbf{P}$ to the roles $x,y,z$ such that (i) and (ii) hold, then $F$ selects the alternative playing the $x$ role, rather than the Minimax winner. That there are profiles satisfying (i) and (ii), so $F$ differs from Minimax, is shown by the following, where $F(\mathbf{P})=\{x\}$:
\begin{center}
\begin{minipage}{1.5in}
\begin{tabular}{ccc}
4 & 5  & 2   \\ \hline 
 $x$ & $y$ & $z$   \\
  $y$ & $z$  & $x$   \\
   $z$ & $x$ & $y$       
\end{tabular}\end{minipage}\begin{minipage}{1.5in}\begin{tikzpicture}

\node[circle,draw, minimum width=0.25in] at (0,0) (a) {$x$}; 
\node[circle,draw,minimum width=0.25in] at (3,0) (c) {$z$}; 
\node[circle,draw,minimum width=0.25in] at (1.5,1.5) (b) {$y$}; 

\path[->,draw,thick] (b) to node[fill=white] {$7$} (c);
\path[->,draw,thick] (c) to node[fill=white] {$3$} (a);
\path[->,draw,thick] (a) to node[fill=white] {$1$} (b);

\end{tikzpicture}\end{minipage}
\end{center}
\noindent  Adding a block of all linear orders to the profile above---and hence adding the ranking $yxz$---changes the $F$ winner to be the Minimax winner, $y$, so $F$ violates block preservation. Yet clearly $F$ satisfies anonymity, neutrality, homogeneity,  and immunity to spoilers (since $F$ agrees with Minimax in all of the profiles satisfying the assumptions in the definition of immunity to spoilers). It remains to prove positive involvement and weak positive responsiveness. Consider an arbitrary profile~$\mathbf{P}$.

Case 1: there are weak Condorcet winners in $\mathbf{P}$. Then $F(\mathbf{P})=Minimax(\mathbf{P})$, and for $a\in F(\mathbf{P})$, adding a voter who ranks $a$  first  (resp.~having some voter move $a$ from last to first) makes $a$ a Condorcet winner, so $F(\mathbf{P}')=Minimax(\mathbf{P}')=\{a\}$, in line with positive involvement (resp.~weak positive responsiveness).

Case 2: there are no weak Condorcet winners in $\mathbf{P}$, so there is a cycle.

Case 2.1: condition (i) in the definition of $F$ holds. Suppose alternatives $a$, $b$, and $c$ play the roles of $x$, $y$, and $z$ in condition (i):
\begin{center}
\begin{minipage}{1.5in}\begin{tikzpicture}

\node[circle,draw, minimum width=0.25in] at (0,0) (a) {$a$}; 
\node[circle,draw,minimum width=0.25in] at (3,0) (c) {$c$}; 
\node[circle,draw,minimum width=0.25in] at (1.5,1.5) (b) {$b$}; 

\path[->,draw,thick] (b) to node[fill=white] {$k$} (c);
\path[->,draw,thick] (c) to node[fill=white] {$m$} (a);
\path[->,draw,thick] (a) to node[fill=white] {$n$} (b);

\end{tikzpicture}\end{minipage}\begin{minipage}{1.5in}where $0\leq m-n < k-m$\end{minipage}
\end{center}
In this case, we need a lemma: if $a\in F(\mathbf{P})$, $\mathbf{P}'$ is obtained from $\mathbf{P}$ by adding one voter who ranks $a$ in first place or having some voter move $a$ from last to first place, and as a result condition (i) no longer holds in $\mathbf{P}'$ with $a\mapsto x$, $b\mapsto y$, and $c\mapsto z$, then $F(\mathbf{P}')=\{a\}$. To see this, suppose $\mathbf{P}$ and $\mathbf{P}'$ are as described. Since condition (i) does not hold in $\mathbf{P}'$ with $a\mapsto x$, $b\mapsto y$, and $c\mapsto z$,  either $0\not\leq Margin_{\mathbf{P}'}(c,a) - Margin_{\mathbf{P}'}(a,b)$ or $ Margin_{\mathbf{P}'}(c,a) - Margin_{\mathbf{P}'}(a,b) \not<Margin_{\mathbf{P}'}(b,c) - Margin_{\mathbf{P}'}(c,a) $. But we can rule out the latter possibility, since by the assumption of the case, we have $ 0\leq Margin_{\mathbf{P}}(c,a) - Margin_{\mathbf{P}}(a,b) <Margin_{\mathbf{P}}(b,c) - Margin_{\mathbf{P}}(c,a) $, and the margin of $a$ over $b$ increases from $\mathbf{P}$ to $\mathbf{P}'$ by at least as much as the margin of $b$ over $c$ decreases from $\mathbf{P}$ to $\mathbf{P}'$, while the margin of $c$ over $a$ decreases. Thus, we conclude that $0\not\leq Margin_{\mathbf{P}'}(c,a) - Margin_{\mathbf{P}'}(a,b)$, i.e., $ Margin_{\mathbf{P}'}(c,a) < Margin_{\mathbf{P}'}(a,b) $. Note that since $ Margin_{\mathbf{P}}(c,a) < Margin_{\mathbf{P}}(b,c)$ and the margin of $c$ over $a$ decreases by at least as much as the margin of $b$ over $c$ decreases from $\mathbf{P}$ to $\mathbf{P}'$, we have $ Margin_{\mathbf{P}'}(c,a) < Margin_{\mathbf{P}'}(b,c)$. Now there are two cases:

First, suppose $ Margin_{\mathbf{P}'}(c,a) < Margin_{\mathbf{P}'}(a,b)\leq Margin_{\mathbf{P}'}(b,c)$. Thus, following the cycle from its weakest edge, we increase and then weakly increase in margin. It follows that the margin graph of $\mathbf{P}'$ does not satisfy condition (i) under any bijection from $\{a,b,c\}$ to $\{x,y,z\}$. Hence~$F(\mathbf{P}')=Minimax(\mathbf{P}')=\{a\}$.

Second, suppose $ Margin_{\mathbf{P}'}(c,a) < Margin_{\mathbf{P}'}(b,c) < Margin_{\mathbf{P}'}(a,b)$. Thus, the only way $\mathbf{P}'$ can satisfy (i) is with $a\mapsto y$, $b\mapsto z$, $c\mapsto x$. But we claim that the required inequality in (i) is not satisfied under this bijection. Since $Margin_{\mathbf{P}}(b,c) - Margin_{\mathbf{P}}(a,b)\geq 1$, it follows by how $\mathbf{P}'$ was obtained from $\mathbf{P}$ that $Margin_{\mathbf{P}'}(a,b)-Margin_{\mathbf{P}'}(b,c) \leq 1$. But then $Margin_{\mathbf{P}'}(b,c)  - Margin_{\mathbf{P}'}(c,a)  \not < Margin_{\mathbf{P}'}(a,b)-Margin_{\mathbf{P}'}(b,c)$, since the left side is at least $1$ and the right side at most $1$. It follows that the margin graph of $\mathbf{P}'$ does not satisfy condition (i) under the stated bijection.  Hence $F(\mathbf{P}')=Minimax(\mathbf{P}')=\{a\}$.

With the lemma established, we now proceed with our main case analysis.

Case 2.1.1: condition (ii)  holds, so there is no voter with the ranking $bac$. Then $F(\mathbf{P})=\{a\}$ (contrary to Minimax). Now if we add one voter who ranks $a$ in first place or have some voter move $a$ from last to first place to obtain $\mathbf{P}'$, then we have two cases.

Case 2.1.1.1: the margin graph of $\mathbf{P}'$ is still as in (i) with $a\mapsto x$, $b\mapsto y$, and $c\mapsto z$. Then since (ii) still holds in $\mathbf{P}'$, we still have $F(\mathbf{P}')=\{a\}$.

Case 2.1.1.2: the margin graph of $\mathbf{P}'$ is no longer as in (i) with $a\mapsto x$, $b\mapsto y$, and $c\mapsto z$.  Then by the lemma established above, we again have $F(\mathbf{P}')=\{a\}$.

Case 2.1.2: condition (ii) does not hold, i.e., there is some voter with the ranking $bac$.

Case 2.1.2.1: $n<m$. Then $F(\mathbf{P})=Minimax(\mathbf{P})=\{b\}$. If we add a voter who ranks $b$ in first place or have some voter move $b$ from last  to first place, then condition (i) still holds for $\mathbf{P}'$ with $a\mapsto x$, $b\mapsto y$, and $c\mapsto z$, and since there is still a voter with $bac$, condition (ii) still does not hold for $\mathbf{P}'$, so $F(\mathbf{P}')=Minimax(\mathbf{P}')=\{b\}$.

Case 2.1.2.2: $n=m$. Then $F(\mathbf{P})=Minimax(\mathbf{P})=\{a,b\}$. If we add a voter who ranks $b$ in first place or have some voter move $b$ from last to first, then condition (i) still holds for $\mathbf{P}'$ with $a\mapsto x$, $b\mapsto y$, and $c\mapsto z$, and  condition (ii) still does not hold for $\mathbf{P}'$, so $F(\mathbf{P}')=Minimax(\mathbf{P}')=\{b\}$.  Now suppose we add a voter who ranks $a$ in first place or have some voter move $a$ from last to first. Then since $n=m$, in $\mathbf{P}'$ the margin of $c$ over $a$ is smaller than that of $a$ over $b$ and that of $b$ over $c$, so condition (i) no longer holds in $\mathbf{P}'$ with $a\mapsto x$, $b\mapsto y$, and $c\mapsto z$. Therefore, by the lemma established above, $F(\mathbf{P}')=\{a\}$.

Case 2.2: condition (i) in the definition of $F$ does not hold, so we have a cycle of the form 
\begin{center}
\begin{tikzpicture}

\node[circle,draw, minimum width=0.25in] at (0,0) (a) {$a$}; 
\node[circle,draw,minimum width=0.25in] at (3,0) (c) {$c$}; 
\node[circle,draw,minimum width=0.25in] at (1.5,1.5) (b) {$b$}; 

\path[->,draw,thick] (b) to node[fill=white] {$k$} (c);
\path[->,draw,thick] (c) to node[fill=white] {$m$} (a);
\path[->,draw,thick] (a) to node[fill=white] {$n$} (b);

\end{tikzpicture}
\end{center}
where either $n=m=k$ or $n<m$, $n<k$, and $k-m\leq m-n$.

Case 2.2.1: $n=m=k$. Then $F(\mathbf{P})=Minimax(\mathbf{P})=\{a,b,c\}$. If $\mathbf{P}'$ is obtained by one voter moving $b$ from last to first place, then in $\mathbf{P}'$, the margin of $b$ over $c$ is $n+2$, that of $c$ over $a$ is $n$,  and that of $a$ over $b$ is $n-2$. It follows that the margin graph of $\mathbf{P}'$ does not satisfy condition (i) under any bijection from $\{a,b,c\}$ to $\{x,y,z\}$, so $F(\mathbf{P}')=Minimax(\mathbf{P}')=\{b\}$.  On the other hand, if $\mathbf{P}'$ is obtained from $\mathbf{P}$ by adding one voter who ranks $b$ first, then either this voter's ranking is $bac$ or $bca$. If it is $bac$, then condition (i) still holds for $\mathbf{P}'$ with $a\mapsto x$, $b\mapsto y$, and $c\mapsto z$, but condition (ii) does not hold for $\mathbf{P}'$, due to the presence of $bac$, so $F(\mathbf{P}')=Minimax(\mathbf{P}')=\{b\}$. If it is $bca$, then in the margin graph of $\mathbf{P}'$, the margin of $b$ over $c$ is equal to the margin of $c$ over $a$.  It follows that the margin graph of $\mathbf{P}'$ does not satisfy condition (i) under any bijection from $\{a,b,c\}$ to $\{x,y,z\}$, so $F(\mathbf{P}')=Minimax(\mathbf{P}')=\{b\}$.  The proofs for $a$ and $c$ are analogous.

Case 2.2.2: $n<m$, $n<k$, and $k-m\leq m-n$. Since $n<m$ and $n<k$, we have $F(\mathbf{P})=Minimax(\mathbf{P})=\{b\}$. Now suppose $\mathbf{P}'$ is obtained by one voter moving $b$ from last to first. Since in $\mathbf{P}'$, the margin of $a$ over $b$ is smaller than that of $b$ over $c$ and that of $c$ over $a$, if $\mathbf{P}'$ satisfies (i), it must be under the bijection $a\mapsto x$, $b\mapsto y$, and $c\mapsto z$. Now since in $\mathbf{P}$, we have $k-m\leq m-n$, i.e., $Margin_{\mathbf{P}}(b,c)- Margin_{\mathbf{P}}(c,a)\leq Margin_{\mathbf{P}}(c,a) - Margin_{\mathbf{P}}(a,b)$, we still have $Margin_{\mathbf{P}'}(b,c)- Margin_{\mathbf{P}'}(c,a)\leq Margin_{\mathbf{P}'}(c,a) - Margin_{\mathbf{P}'}(a,b)$. Hence  (i) does not hold, so $F(\mathbf{P}')=Minimax(\mathbf{P}')=\{b\}$. On the other hand, suppose $\mathbf{P}'$ is obtained from $\mathbf{P}$ by adding one voter who ranks $b$ in first place. Again note that since in $\mathbf{P}'$, the margin of $a$ over $b$ is smaller than that of $b$ over $c$ and that of $c$ over $a$, if $\mathbf{P}'$ satisfies (i), it must be under the bijection $a\mapsto x$, $b\mapsto y$, and $c\mapsto z$. Now the ranking of the new voter in $\mathbf{P}'$ is either $bac$ or $bca$.  First, suppose it is $bac$. If (i) does not hold for $\mathbf{P}'$, then $F(\mathbf{P}')=Minimax(\mathbf{P}')=\{b\}$. If it does, then due to the presence of $bac$ in $\mathbf{P}'$,  (ii) does not hold for $\mathbf{P}'$, so $F(\mathbf{P}')=Minimax(\mathbf{P}')=\{b\}$.  Now suppose the new voter's ranking is $bca$. Then by the assumption of the case, it follows that  $Margin_{\mathbf{P}'}(b,c)-Margin_{\mathbf{P}'}(c,a)\leq Margin_{\mathbf{P}'}(c,a)-Margin_{\mathbf{P}'}(a,b)$. Hence (i) does not hold, so $F(\mathbf{P}')=Minimax(\mathbf{P}')=\{b\}$.\end{proof}

Together Lemmas \ref{ThreeAxLem}-\ref{BPLem} establish Proposition \ref{Independence}.

\section{Beyond three alternatives}\label{BeyondThree}

The obvious next step is to go beyond three alternatives. One might argue that some of the axioms in Theorem~\ref{MainThm} ought to be qualified when applied to profiles with four or more alternatives.\footnote{See, e.g., the discussion of \textit{partial} immunity to spoilers in \citealt[\S~4]{HP2023}.} However, it is certainly of interest to consider the axioms as stated. Table~\ref{AxFig} below shows satisfaction/violation of the key axioms in \textit{arbitrary profiles} by five voting methods (from \citealt{Schulze2011}, \citealt{Kemeny1959}, \citealt{Tideman1987}, and \citealt{HP2023,HP2023stable}, respectively) that refine Minimax in three-alternative~profiles.\footnote{For examples of violations, see  \citealt{Perez2001}, \citealt{Schulze2022}, and \href{https://github.com/wesholliday/may-minimax}{https://github.com/wesholliday/may-minimax}. For proofs of satisfaction for Split Cycle, see \citealt{HP2023}. To see that Kemeny satisfies weak positive responsiveness, we give the argument for Kemeny defined on profiles of linear orders. The argument also works for generalizations of Kemeny to non-linear strict weak orders (\citealt{Fagin2006}, \citealt{Zwicker2018}), but we omit the details of these generalizations. Given a profile $\mathbf{P}$, define the \textit{Kemeny score} of a linear order $L$ of $X(\mathbf{P})$ as the sum over all $(x,y)\in L$ of the number of voters who rank $y$ above $x$ (see \citealt[p.~87]{Fischer2016}). Then $Kemeny(\mathbf{P})$ is the set of alternatives from $X(\mathbf{P})$ that are maximal in some linear order on $X(\mathbf{P})$ with minimal Kemeny score. Now suppose $a\in Kemeny(\mathbf{P})$ and $\mathbf{P}'$ is obtained from $\mathbf{P}$ by one voter moving $a$ from uniquely last place to uniquely first place. Given $a\in Kemeny(\mathbf{P})$, $a$ is maximal in some linear order $L$ of $X(\mathbf{P})$ with minimal Kemeny score, so for every $b\in X(\mathbf{P})\setminus\{a\}$, we have $(a,b)\in L$. For each such $b$, the number of voters who rank $b$ above $a$ decreases by 1 from $\mathbf{P}$ to $\mathbf{P}'$, so each term involving $a$ in the summation for the Kemeny score of $L$ decreases by 1. By contrast, if $L'$ does not have $a$ in first place, then the Kemeny score of $L'$ cannot decrease as much from $\mathbf{P}$ to $\mathbf{P}'$. Hence the orders with minimal Kemeny score in $\mathbf{P}'$ all have $a$ in first place, so $Kemeny(\mathbf{P}')=\{a\}$, in line with weak positive responsiveness.} Note, by way of contrast, that all \textit{proper scoring rules} (see \citealt[Def.~2.9]{Zwicker2016}) satisfy weak positive responsiveness and positive involvement but violate (near) immunity to spoilers already for three alternatives.

\begin{table}[h]
\begin{center}
\begin{tabular}{c|c|c|c|c|c|c|c}
 & Beat  & Kemeny & Minimax & Ranked  & Split  & Stable   \\
 &         Path        &  &               &               Pairs          & Cycle                 & Voting  \\
 \hline
 weak positive responsiveness & $-$ & $\checkmark$ & $\checkmark$ & $-$ & $-$ & $-$ \\
 positive involvement & $-$ & $-$ &  $\checkmark$ & $-$ & $\checkmark$ & $-$ \\
(near) immunity to spoilers & $-$ & $-$ & $\checkmark$ & $-$ & $\checkmark$ & $-$
\end{tabular}
\end{center}
\caption{satisfaction ($\checkmark$) or violation ($-$) of axioms for arbitrary profiles. All violations are known to happen already for 4 alternatives except for Beat Path's violation of immunity to spoilers, which requires 5 alternatives, and Stable Voting's violation of weak positive responsiveness, for which the smallest known violation has 8 alternatives. Methods with $-$ in the third row violate even near immunity to spoilers while methods with $\checkmark$ in the third row satisfy immunity to spoilers.}\label{AxFig}
\end{table}

\begin{question}\label{BeyondThreeQ} Do the axioms of Theorem \ref{MainThm} characterize a class of refinements of Minimax for more than three alternatives?
\end{question}

\noindent Of course, Minimax's satisfaction of the axioms in Theorem \ref{MainThm} in arbitrary profiles must be weighed against its violation---in profiles with four or more alternatives---of other axioms, such as the \textit{Condorcet loser criterion} (never select an alternative that has negative margins versus every other alternative), satisfied by the other voting methods in Table~\ref{AxFig}.\footnote{The other methods in Table \ref{AxFig} also satisfy the stronger \textit{Smith criterion}. The \textit{Smith set} of a profile $\mathbf{P}$, $Smith(\mathbf{P})$, is the smallest set of alternatives such that every alternative in the set has a positive margin over every alternative outside the set. The Smith criterion is the requirement on voting methods $F$ that $F(\mathbf{P})\subseteq Smith(\mathbf{P})$ for all $\mathbf{P}\in\mathrm{dom}(F)$. We note that the Smith-Minimax method, which chooses the Minimax winners from $\mathbf{P}_{\mid Smith(\mathbf{P})}$, violates positive involvement already for four alternatives (this follows from the impossibility theorem of \citealt{Holliday2024}).} 

We already have some hints concerning Question \ref{BeyondThreeQ}. Say that a voting method  $F$ satisfies \textit{ordinal margin invariance} if $F(\mathbf{P})=F(\mathbf{P}')$ for any profiles $\mathbf{P},\mathbf{P}'$ that have the same \textit{ordinal margin graph}, which is obtained from their margin graphs by replacing the edge weights by the strict weak order of their edges by edge weight. Ordinal margin invariance implies several of our previous axioms, namely anonymity, homogeneity, and block preservation. Moreover, all of the voting methods in Table \ref{AxFig} except for Kemeny satisfy ordinal margin invariance. Now consider the following impossibility theorem from \citealt{Holliday2024}: already for four alternatives, there is no voting method satisfying positive involvement, the Condorcet winner and loser criteria, ordinal margin invariance, and resolvability. Here resolvability can be replaced by the condition that the voting method selects a unique winner in any profile that is \textit{uniquely weighted}, meaning that for all alternatives $a,b,c,d$ such that $a\neq b$, $c\neq d$,  $(a,b)\neq (c,d)$,  we have $\mathrm{Margin}_\mathbf{P}(a,b)\neq \mathrm{Margin}_\mathbf{P}(c,d)$ (see \citealt[Lemma 3]{Holliday2024}). Then we can relate the stated impossibility theorem to Question \ref{BeyondThreeQ} as follows.

\begin{proposition} Let $F$ be a voting method on the domain of profiles with up to four alternatives satisfying neutrality, weak positive responsiveness, positive involvement, near immunity to spoilers, and ordinal margin invariance. Then $F$ violates the Condorcet loser criterion.
\end{proposition}

\begin{proof} Suppose $F$ is such a method. First we show that ordinal margin invariance and weak positive responsiveness imply that $F$ selects a unique winner in any uniquely-weighted profile. For suppose there is a uniquely-weighted profile $\mathbf{P}$ with $|F(\mathbf{P})|>1$. Since $\mathbf{P}$ and $5\mathbf{P}$ have the same ordinal margin graphs, we have $F(\mathbf{P})=F(5\mathbf{P})$ by ordinal margin invariance, so $|F(5\mathbf{P})|>1$. Let $\mathbf{P}'$ be obtained from $5\mathbf{P}$ by adding a block of all linear orders, which does not change the margin graph, so $|F(\mathbf{P}')|>1$ by ordinal margin invariance. Finally, let $\mathbf{P}''$ be obtained from $\mathbf{P}'$ by one voter moving some alternative in $F(\mathbf{P}')$ from last place to first place (such a voter exists since we added a block of all linear orders). Then by weak positive responsiveness, $|F(\mathbf{P}'')|=1$. But $\mathbf{P}''$ and $\mathbf{P}'$ have the same ordinal margin graph, since $\mathbf{P}$ was uniquely-weighted and our multiplication of $\mathbf{P}$ by $5$ to obtain $\mathbf{P}'$ made the gaps between margins too big for any one voter to change the ordinal margin graph. Thus, $F(\mathbf{P}'')=F(\mathbf{P}')$, contradicting the facts that $|F(\mathbf{P}')|>1$ and  $|F(\mathbf{P}'')|=1$.

Thus, $F$ selects a unique winner in any uniquely-weighted profile. Moreover, the axioms of the current proposition imply the axioms in Theorem \ref{MainThm}, so by that theorem, $F$ satisfies the Condorcet winner criterion for profiles up to three alternatives. To prove that $F$ satisfies the Condorcet winner criterion for profiles of four alternatives, we reason exactly as in the proof of the first paragraph of Theorem \ref{MainThm} except that we invoke Theorem \ref{MainThm} itself to prove $F(\mathbf{P}''_{-b})=\{c\}$. Thus, $F$ satisfies the Condorcet winner criterion. But then by the main result of \citealt{Holliday2024}, $F$ violates the Condorcet loser criterion.\end{proof}

We suspect that replacing homogeneity and block preservation in Theorem \ref{MainThm} by ordinal margin invariance, which implies both axioms, suffices for axiomatically characterizing the refinements of Minimax for four alternatives. However, ordinal margin invariance is arguably less normatively compelling than the other axioms, so ideally we would have an axiomatic characterization without this axiom.

\section{Conclusion}\label{ConclusionSection}

By adding a compelling set of axioms to those of May's Theorem, Theorem \ref{MainThm} determines how to choose winners in profiles with three alternatives: use Minimax---or some refinement thereof when there are tied margins. Of course, we have not shown that these axioms are the normatively best axioms to augment May's or that Minimax is the normatively best method of voting on three alternatives. However, we believe that Theorem \ref{MainThm} and axiomatic characterizations of other voting methods for three alternatives greatly facilitate the normative comparison of costs and benefits of methods of voting beyond two alternatives. Given recent proposals to implement top-three general elections using Minimax in the United States (\citealt{Foley2024}), such comparisons have not only theoretical but also political importance.

\subsection*{Acknowledgements}
For helpful comments, we thank Yifeng Ding, Mikayla Kelley, David McCune, Milan Moss\'{e},  Zoi Terzopoulou, Bill Zwicker, the two reviewers and editor for \textit{Social Choice and Welfare}, and the audience at the AMMCS 2023 special session on Decisions and Fairness, where this work was first presented.

\subsection*{Compliance with Ethical Standards}
The authors have no conflicts of interest.

\appendix

\section{Minimax ties and refinements of Minimax}\label{Appendix}

In this Appendix, we consider  how we can \textit{properly} refine Minimax on the domain of three-alternative profiles while still satisfying the axioms of Theorem \ref{MainThm} on this domain. Let us say that the \textit{marginal Borda score} of an alternative $a$ in a profile $\mathbf{P}$ is 
\[\underset{b\in X(\mathbf{P})}{\sum} \mathrm{Margin}_\mathbf{P}(a,b).\]
In profiles of linear ballots, the alternatives with the greatest marginal Borda score are also the alternatives with the greatest Borda score in the usual sense (see \citealt[p.~28]{Zwicker2016}), but marginal Borda scores make sense in any profile of  binary relations. Now define Minimax MB (for marginal Borda) as follows: the Minimax MB winners are the Minimax winners with the greatest marginal Borda scores. It is easy to check that Minimax MB satisfies all the axioms in Theorem \ref{MainThm} on the domain of three-alternative profiles. 

For three alternatives, Minimax MB differs from Minimax only in the following cases, using terminology from the proof of Theorem~\ref{MainThm}: 
\begin{itemize}
\item the ascending case when $n=m$ (Minimax MB selects only $c$, whereas Minimax selects $b$ and $c$);
\item the Condorcet loser case when $n<k$ (again Minimax MB selects only $c$, whereas Minimax selects $b$ and $c$).
\begin{center}
\begin{tikzpicture}

\node[circle,draw, minimum width=0.25in] at (0,0) (a) {$a$}; 
\node[circle,draw,minimum width=0.25in] at (3,0) (c) {$c$}; 
\node[circle,draw,minimum width=0.25in] at (1.5,1.5) (b) {$b$}; 

\path[->,draw,thick, densely dashed] (b) to node[fill=white] {$m$} (c);
\path[->,draw,thick] (c) to node[fill=white] {$k$} (a);
\path[->,draw,thick, densely dashed] (a) to node[fill=white] {$n$} (b);

\node at (1.5,-.75) {{$0\leq n= m <k$}};
\node at (1.5,-1.5) {{ascending case with $n=m$}};

\end{tikzpicture}\qquad\qquad\begin{tikzpicture}

\node[circle,draw, minimum width=0.25in] at (0,0) (a) {$a$}; 
\node[circle,draw,minimum width=0.25in] at (3,0) (c) {$c$}; 
\node[circle,draw,minimum width=0.25in] at (1.5,1.5) (b) {$b$}; 

\path[->,draw,thick] (b) to node[fill=white] {$n$} (a);
\path[->,draw,thick] (c) to node[fill=white] {$k$} (a);

\node at (1.5,-.75) {{$0< n< k$}};
\node at (1.5,-1.5) {{Condorcet loser case with $n<k$}};

\end{tikzpicture}
\end{center}
\end{itemize} 
In the second case, Minimax MB agrees with Stable Voting (\citealt[Prop.~1.1]{HP2023stable}). This case also shows that for three alternatives, Minimax MB satisfies only \textit{near} immunity to spoilers, not immunity to spoilers (since $b$ wins without $a$ in the profile and beats $a$ head-to-head, immunity to spoilers requires $b$ to win in the full profile).\footnote{Moreover, Minimax MB violates even near immunity to spoilers starting at four alternatives. Consider a Condorcet loser case with $n<k$, so Minimax MB selects only $c$; but now add another alternative $d$, whom $b$ and $c$ both beat head-to-head. If $b$'s margin over $d$ is sufficiently large, then although $c$ is the unique winner without $d$ and beats $d$ head-to-head, when $d$ joins, $b$ will be the unique Minimax MB winner, which violates near immunity to spoilers.} To satisfy immunity to spoilers as well as the other axioms of Theorem \ref{MainThm} for three alternatives, one can consider the method that only properly refines Minimax in the ascending case when $n=m$, selecting $c$ in this case. On the other hand, a noteworthy feature of Minimax MB is the following.

\begin{proposition} Minimax MB satisfies positive responsiveness as stated in Remark \ref{PRremark} (\citealt[Def.~7]{Barbera1977}) for any number of alternatives.
\end{proposition}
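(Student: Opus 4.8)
The plan is to track how a single-voter improvement to $a$ propagates through margins, then Minimax scores, then marginal Borda scores. Read for an arbitrary number of alternatives, positive responsiveness as in Remark~\ref{PRremark} says: if $a\in F(\mathbf{P})$ and $\mathbf{P}'$ is obtained from $\mathbf{P}$ by a single voter who weakly raises $a$ against every alternative, strictly against at least one, while leaving the relative ranking of all alternatives other than $a$ unchanged, then $F(\mathbf{P}')=\{a\}$. The first step is pure bookkeeping on margins: such a change yields $\mathrm{Margin}_{\mathbf{P}'}(a,y)\geq \mathrm{Margin}_{\mathbf{P}}(a,y)$ for every $y\in X(\mathbf{P})$, with $\mathrm{Margin}_{\mathbf{P}'}(a,y_0)\geq \mathrm{Margin}_{\mathbf{P}}(a,y_0)+1$ for at least one $y_0$ (moving $a$ above, or into a tie with, an alternative it was previously tied with or below raises that margin by at least $1$), while $\mathrm{Margin}_{\mathbf{P}'}(y,z)=\mathrm{Margin}_{\mathbf{P}}(y,z)$ for all $y,z\neq a$.

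Second, I would read off the Minimax scores. Since the only margins that change are those of the form $\mathrm{Margin}(a,y)$ (weakly up) and $\mathrm{Margin}(y,a)$ (weakly down), $a$'s Minimax score (the largest margin against $a$) does not increase from $\mathbf{P}$ to $\mathbf{P}'$, while every other alternative's Minimax score does not decrease; as $a\in\mathrm{Minimax}(\mathbf{P})$, this already gives $a\in\mathrm{Minimax}(\mathbf{P}')$. The key observation is a squeeze: if some $b\neq a$ also lies in $\mathrm{Minimax}(\mathbf{P}')$, then, writing $s,s'$ for $a$'s Minimax score in $\mathbf{P},\mathbf{P}'$ and $t,t'$ for $b$'s, we have $t\leq t'\leq s'\leq s\leq t$ --- the first and third inequalities from the monotonicity just noted, the second because $b\in\mathrm{Minimax}(\mathbf{P}')$, the fourth because $a\in\mathrm{Minimax}(\mathbf{P})$ --- so all four quantities are equal; in particular $b\in\mathrm{Minimax}(\mathbf{P})$ and no Minimax score moved.

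Third, I would compare marginal Borda scores of $a$ and such a $b$. Summing the margin inequalities from the first step, $a$'s marginal Borda score strictly increases, by at least $1$, from $\mathbf{P}$ to $\mathbf{P}'$. The only margin out of $b$ that changes is $\mathrm{Margin}(b,a)=-\mathrm{Margin}(a,b)$, which weakly decreases, so $b$'s marginal Borda score weakly decreases. Since $a$ is a Minimax MB winner of $\mathbf{P}$ and $b$ is, by the squeeze, a Minimax winner of $\mathbf{P}$, the marginal Borda score of $a$ in $\mathbf{P}$ is at least that of $b$ in $\mathbf{P}$. Chaining the three comparisons, the marginal Borda score of $a$ in $\mathbf{P}'$ is strictly greater than that of $b$ in $\mathbf{P}'$. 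As $b$ ranged over all Minimax winners of $\mathbf{P}'$ other than $a$, the alternative $a$ is the unique Minimax winner of $\mathbf{P}'$ with maximum marginal Borda score, i.e.\ Minimax MB$(\mathbf{P}')=\{a\}$.

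I do not expect a serious obstacle: the argument is three short monotonicity observations glued together. The only point needing care is the case in which $a$ fails to become the \emph{unique} Minimax winner after the change --- this is exactly where the marginal Borda tiebreak does its work, via the squeeze showing that any rival winner and its score were already present in $\mathbf{P}$ --- together with pinning down the "moving $a$ up" relation, in particular that improving $a$'s position against even a single alternative strictly raises the corresponding margin whether the voter previously ranked that alternative above $a$ or tied with $a$.
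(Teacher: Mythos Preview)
Your proposal is correct and follows essentially the same three-step monotonicity argument as the paper (margins $\to$ Minimax scores $\to$ marginal Borda scores). Your squeeze argument, showing that any rival Minimax winner $b$ in $\mathbf{P}'$ must already have been a Minimax winner in $\mathbf{P}$, makes explicit a step the paper's proof leaves implicit when it jumps from ``$a$'s marginal Borda score increases while no other alternative's does'' to ``$a$ is the unique Minimax MB winner''; this is a useful clarification but not a different route.
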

\begin{proof} Suppose $a$ is among the Minimax MB winners in $\mathbf{P}$, so $a$ is among the Minimax winners with the greatest marginal Borda score. Further suppose $\mathbf{P}'$ is obtained from $\mathbf{P}$ by one voter improving $a$'s position in their ballot, while nothing else changes.\footnote{More precisely, there is some $b\in X(\mathbf{P})$ such that (i) $(a,b)\not\in\mathbf{P}(i)$ changes to $(a,b)\in \mathbf{P}'(i)$ or (ii) $(b,a)\in \mathbf{P}(i)$ changes to $(b,a)\not\in\mathbf{P}'(i)$, while for all alternatives $c\in X(\mathbf{P})\setminus \{a\}$, $(a,c)\in\mathbf{P}(i)$ implies $(a,c)\in\mathbf{P}'(i)$, $(c,a)\not\in \mathbf{P}(i)$ implies $(c,a)\not\in\mathbf{P}'(i)$, and for all  $d\in X(\mathbf{P}) \setminus \{a\}$, $(c,d)\in \mathbf{P}(i)$ if and only if $(c,d)\in\mathbf{P}'(i)$.}  Then each margin of $a$ vs.~another alternative is at least as great in $\mathbf{P}'$ as in $\mathbf{P}$, while at least one such margin is strictly greater in $\mathbf{P}'$; and no other alternative's margin against another alternative increases. Thus, $a$ still has a minimal Minimax score in $\mathbf{P}'$, and its marginal Borda score increases, while no other alternative's marginal Borda score increases. It follows that $a$ is the unique Minimax MB winner in~$\mathbf{P}'$.
\end{proof}

The tradeoff exemplified by Minimax MB vs.~Minimax concerning satisfying positive responsiveness (as opposed to weak positive responsiveness) vs.~immunity to spoilers (as opposed to near immunity to spoilers) is inevitable in the following sense.

\begin{proposition}\label{AppendixProp} There is no voting method that satisfies anonymity, neutrality, positive responsiveness as in Remark \ref{PRremark}, and immunity to spoilers for \textit{all} three-alternative profiles. However, Minimax MB, which satisfies anonymity, neutrality, and positive responsiveness, satisfies immunity to spoilers for all three-alternative profiles with no zero margins between distinct alternatives.\end{proposition}

\begin{proof} For the first claim, suppose for contradiction that there is such a method $F$. Consider the eight-voter profile $\mathbf{P}$ in which each of the six linear orders on $\{a,b,c\}$ is submitted by one voter, while the other two voters, say $i$ and $j$, submit $bca$ and $cba$, respectively. Note that  $\mathrm{Margin}_\mathbf{P}(c,b)=0$ and $\mathrm{Margin}_\mathbf{P}(b,a)=\mathrm{Margin}_\mathbf{P}(c,a)=2$. We claim that $a\not\in F(\mathbf{P})$. Toward a contradiction, suppose $a\in F(\mathbf{P})$. Let $\mathbf{P}^{*}$ be obtained from $\mathbf{P}$ by voter $i$ switching from $bca$ to $bac$, while $j$ still submits $cba$. Then by positive responsiveness, $a\in F(\mathbf{P})$ implies $F(\mathbf{P}^*)=\{a\}$. Let $\mathbf{P}^{**}$ be obtained from $\mathbf{P}^*$ by $i$ switching from $bac$ to $abc$ and $j$ switching from $cba$ to $cab$. Then by positive responsiveness, $F(\mathbf{P}^*)=\{a\}$ implies $F(\mathbf{P}^{**})=\{a\}$. On the other hand, since $\mathbf{P}^*$ consists of each of the six linear orders plus an extra copy of each of $bac$ and $cba$, while $\mathbf{P}^{**}$ consists of each of the six linear orders plus an extra copy of each of  $abc$ and $cab$, so that the roles of $a$ and $b$ are exchanged from $\mathbf{P}^*$ to $\mathbf{P}^{**}$, $F(\mathbf{P}^*)=\{a\}$ implies $F(\mathbf{P}^{**})=\{b\}$ by a standard argument using anonymity and neutrality. Thus, we have a contradiction, so $a\not\in F(\mathbf{P})$. Then since $b$ and $c$ play symmetrical roles in $\mathbf{P}$, it follows by standard arguments using anonymity and neutrality again that $F(\mathbf{P})=\{b,c\}$ and $F(\mathbf{P}_{-a})=\{b,c\}$. Now let $\mathbf{P}'$ be obtained from $\mathbf{P}$ by a voter with the ranking $bac$ switching to $bca$. Then by positive responsiveness, $F(\mathbf{P}')=\{c\}$. Since $\mathbf{P}'_{-a}=\mathbf{P}_{-a}$, we have  $b\in F(\mathbf{P}_{-a}')$. Moreover, $\mathrm{Margin}_{\mathbf{P}'}(b,a)=\mathrm{Margin}_{\mathbf{P}}(b,a)=2$, so $F(\mathbf{P}_{\mid \{a,b\}})=\{b\}$ by Theorem \ref{May}. Then since $a\not\in F(\mathbf{P}')$, immunity to spoilers requires $b\in F(\mathbf{P}')$, contradicting $F(\mathbf{P}')=\{c\}$.

For the second claim, we have already observed that Minimax MB satisfies anonymity, neutrality, and positive responsiveness. For immunity to spoilers with respect to three-alternative profiles with no zero margins between distinct alternatives, suppose $\mathbf{P}$ is such a profile, $a$ is among the Minimax MB winners in $\mathbf{P}_{-b}$, and $a$ is the unique winner in $\mathbf{P}_{\mid \{a,b\}}$. Then since $\mathbf{P}$ has no zero margins between distinct alternatives, it follows that $a$ is the Condorcet winner and hence the Minimax MB winner in $\mathbf{P}$.\end{proof}

\bibliographystyle{plainnat}
\bibliography{social}

\end{document}